\documentclass[10pt,conference,compsocconf]{IEEEtran}

\usepackage{url}
\usepackage{color}
\usepackage{amsmath}
\usepackage{amssymb}
\usepackage{mathpartir} 
\usepackage{enumerate}
\usepackage{mdwlist}
\usepackage{makeidx}
\usepackage{graphics,graphicx}

\makeindex

\makeatletter
\def\url@leostyle{%
  \@ifundefined{selectfont}{\def\UrlFont{\sf}}{\def\UrlFont{\small\ttfamily}}}
\makeatother
\urlstyle{leo}

\newtheorem{definition}{Definition}
\newtheorem{theorem}{Theorem}

\newcommand{\wsep}{{\,,\,}}



\newcommand{\ninfty}{-\infty}


\newcommand{\btrue}{\text{$\tt true$}}

\newcommand{\astep}{\mathrel{\mapsto}}
\newcommand{\cstep}{\mathrel{\longrightarrow}}
\newcommand{\cstept}[1]{\mathrel{\stackrel{#1}{\cstep}}}
\newcommand{\tstep}{\mathrel{\hookrightarrow}}
\newcommand{\tstept}[1]{\mathrel{\stackrel{#1}{\tstep}}}

\newcommand{\rname}[1]{\ensuremath{\text{#1}}}






\newcommand{\Server}{\ensuremath{{\sf Server}}}

\newcommand{\User}{\ensuremath{{\sf User}}}

\newcommand{\Notary}{\ensuremath{{\sf Notary}}}
\newcommand{\Adversary}{\ensuremath{{\sf Adversary}}}

\newcommand{\action}[1]{\texttt{#1}}

\newcommand{\pred}[1]{{\tt #1}}

\newcommand{\at}{\mathrel{\mathrm{@}}}

\newcommand{\imp}{\mathrel{\supset}}

\newcommand{\cconj}{\mathrel{\wedge}}

\newcommand{\cut}[1]{}

\newcommand{\redlabel}{\ensuremath{r}}

\pagenumbering{arabic}
\pagestyle{plain}

\title{Program Actions as Actual Causes: \\ A Building Block for Accountability }
\author{
\IEEEauthorblockN{Anupam Datta}
\IEEEauthorblockA{$\quad$Carnegie Mellon University$\quad$\\
Email: danupam@cmu.edu}
\and
\IEEEauthorblockN{Deepak Garg}
\IEEEauthorblockA{$\quad$Max Planck Institute for Software Systems$\quad$\\
Email: dg@mpi-sws.org}
\and
\IEEEauthorblockN{Dilsun Kaynar}
\IEEEauthorblockA{$\quad$Carnegie Mellon University$\quad$\\
Email: dilsunk@cmu.edu}
\and
\IEEEauthorblockN{$\qquad \qquad \quad$}
\IEEEauthorblockA{$$}
\and
\IEEEauthorblockN{Divya Sharma}
\IEEEauthorblockA{Carnegie Mellon University\\
Email: divyasharma@cmu.edu}
\and
\IEEEauthorblockN{Arunesh Sinha}
\IEEEauthorblockA{University of Southern California \\
Email: aruneshs@usc.edu}
}

\begin{document}

\maketitle

\begin{abstract}
Protocols for tasks such as authentication, electronic voting, and
secure multiparty computation ensure desirable security properties if
agents follow their prescribed programs. However, if some agents
deviate from their prescribed programs and a security property is
violated, it is important to hold agents \emph{accountable} by
determining which deviations actually caused the violation.  Motivated
by these applications, we initiate a formal study of \emph{program actions as
actual causes}.  Specifically, we define in an interacting program model 
what it means for a set of program actions to be an actual cause of a violation. 
We present a sound technique for establishing program actions as
actual causes.  We demonstrate the value of this formalism in two
ways. First, we prove that violations of a specific class of safety
properties always have an actual cause. Thus, our definition applies
to relevant security properties.  Second, we provide a cause analysis
of a representative protocol designed to address weaknesses in the
current public key certification infrastructure.
\end{abstract}

\begin{IEEEkeywords}
Security Protocols,
Accountability,
Audit,
Causation
\end{IEEEkeywords}

\IEEEpeerreviewmaketitle

\section{Introduction} \label{sec:introduction} 
Ensuring accountability for security violations is essential in a wide
range of settings. For example, protocols for authentication and key
exchange~\cite{kaufman1995}, electronic voting~\cite{rivest2007},
auctions~\cite{parkes2008}, and secure multiparty computation (in the
semi-honest model)~\cite{goldreich1987} ensure desirable security
properties if protocol parties follow their prescribed
programs. However, if they deviate from their prescribed programs and
a security property is violated, determining which agents should be
held accountable and appropriately punished is important to deter
agents from committing future violations.  Indeed the importance of
accountability in information systems has been recognized in prior
work
\cite{nissenbaum1996,lampson2004,backesDDMT06,haeberlenKD07,weitzner2008,jagadeesanJPR09,kusters2010}.
Our thesis is
that \emph{actual causation} (i.e., identifying which agents' actions caused a specific violation)
is a useful building block for accountability in decentralized multi-agent systems, including but
not limited to security protocols and ceremonies~\cite{ellisonceremonydesign}. 

Causation has been of interest to philosophers and ideas from
philosophical literature have been introduced into computer science by
the seminal work of Halpern and Pearl~\cite{halpernpearl2001,
halpernpearl2005, pearlbook2000}. In particular, counterfactual
reasoning is appealing as a basis for study of causation. 
\cut{
Indeed, actual causation is a building block for causal explanations~\cite{halpern2005explanations}, which 
can used to provide an account of why a violation happened. It also is a building block for 
blame assignment in influential theories of moral and legal blame~\cite{moore2009,hart1985,shaver2012attribution, alicke2000culpable,
feinberg1985suaculpa, kenner1967blaming}.

Actual causation has been studied extensively in philosophy, law, and
computer science where }
Much of the definitional activity has centered
around the question of what it means for event $c$ to be an actual
cause of event $e$.  An answer to this question is useful to arrive at
causal judgments for specific scenarios such as ``John's smoking
causes John's cancer'' rather than general inferences such as
``smoking causes cancer'' (The latter form of judgments are studied
in the related topic of type causation~\cite{pearlbook2000}).
Notably, Hume~\cite{hume1748} identified actual causation with
counterfactual dependence---the idea that $c$ is an actual cause of
$e$ if had $c$ not occurred then $e$ would not have occurred. While
this simple idea does not work if there are independent causes, the
counterfactual interpretation of actual causation has been developed
further and formalized in a number of influential works (see, for
example,~\cite{lewis1973,pearlbook2000,hallbook,mackie1965,wright1985,halpernpearl2005}).

Even though
applications of counterfactual causal analysis are starting to emerge
in the fields of AI, model-checking, and programming languages,
causation has not yet been studied in connection with security
protocols and violations thereof. On the other hand, causal analysis
seems to be an intuitive building block for answering some very
natural questions that have direct relevance to accountability such as (i) \emph{why} a particular violation occurred, (ii) \emph{what} component in the protocol is blameworthy
for the violation and (iii) \emph{how} the protocol could have been
designed differently to preempt violations of this sort. Answering
these questions requires an in-depth study of, respectively,
explanations, blame-assignment, and protocol design, which are
interesting problems in their own right, but are not the explicit
focus of this paper. Instead, we focus on a formal definition
of causation that we believe formal studies of these problems will
need. Roughly speaking, explanations can be used to
provide an \emph{account} of the violation, \emph{blame assignment}
can be used to hold agents \emph{accountable} for the violation,
and protocol design informed by these would lead to protocols
with better accountability guarantees. We further elaborate on explanations and blame-assignment in Section~\ref{sec:domains}. 

Formalizing actual causes as a building block for accountability in
decentralized multi-agent systems raises new conceptual and technical
challenges beyond those addressed in the literature on events as
actual causes. In particular, prior work does not account for the
program dynamics that arise in such settings. Let us consider a
simple protocol example.  In the
movie \emph{Flight}~\cite{flight2012}, a pilot drinks and snorts
cocaine before flying a commercial plane, and the plane goes into a
locked dive in mid-flight. While the pilot's behavior is found to be
deviant in this case---he does not follow the prescribed protocol
(program) for pilots---it is found to not be an actual cause of the
plane's dive. The actual cause was a deviant behavior by the
maintenance staff---they did not replace a mechanical component that
should have been replaced. Ideally, the maintenance staff should have
inspected the plane prior to take-off according to their prescribed
protocol.

This example is useful to illustrate several key ideas that influence
the formal development in this paper.  First, it illustrates the
importance of capturing the \emph{actual interactions} among agents in
a decentralized multi-agent system with non-deterministic execution
semantics.  The events in the movie could have unfolded in a different
order but it is clear that the actual cause determination needs to be
done based on the sequence of events that happened in reality.  For
example, had the maintenance staff replaced the faulty
component \emph{before} the take-off the plane would not have gone
into a dive.  Second, the example motivates us to hold accountable
agents who exercise their choice to execute a deviant \emph{program} that
actually caused a violation. The maintenance staff had the choice to
replace the faulty component or not where the task of replacing the
component could consist of multiple steps. It is important to identify
which of those steps were crucial for the occurrence of the dive. Thus, we
focus on formalizing \emph{program actions} executed in sequence (by agents)
as actual causes of violations rather than individual, independent
events as formalized in prior work.  Finally, the example highlights
the difference between deviance and actual causes---a difference also
noted in prior work on actual causation. This difference is important
from the standpoint of accountability. In particular, the punishment
for deviating from the prescribed protocol could be suspension or
license revocation whereas the punishment for actually causing a plane
crash in which people died could be significantly higher (e.g.,
imprisonment for manslaughter). The first and second ideas, reflecting
our program-based treatment, are the most significant points of
difference from prior work on actual
causation~\cite{halpernpearl2005,defaults2008} while the third idea is
a significant point of difference from prior work in
accountability~\cite{feigenbaum2011,
feigenbaum2011deterrence, kusters2010, backesDDMT06}.

The central contribution of this paper is a formal definition of 
\emph{program actions as actual causes}.  Specifically, we
define what it means for a set of program actions to be an actual
cause of a violation. The definition considers a set of interacting
programs whose concurrent execution, as recorded in a log, violates a
trace property. It identifies a subset of actions (program steps) of
these programs as an actual cause of the violation. The definition
applies in two phases. The first phase identifies what we
call \emph{Lamport causes}. A Lamport cause is a minimal prefix of the
log of a violating trace that can account for the violation. In the
second phase, we refine the actions on this log by removing the
actions which are merely
\emph{progress enablers} and obtain \emph{actual action causes}. The former
contribute only indirectly to the cause by enabling the actual action
causes to make progress; the exact values returned by progress
enabling actions are irrelevant.

We demonstrate the value of this formalism in two ways. First, we
prove that violations of a precisely defined class of safety
properties always have an actual cause. Thus, our definition applies
to relevant security properties.  Second, we provide a cause analysis
of a representative protocol designed to address weaknesses in the
current public key certification infrastructure. Moreover, our example
illustrates that our definition cleanly handles the separation between
joint and independent causes --a recognized challenge for actual cause
definitions~\cite{halpernpearl2001,
halpernpearl2005, pearlbook2000}. 

 In addition, we discuss how
this formalism can serve as a building block for causal explanations
and exoneration (i.e., soundly identifying agents who should not be
blamed for a violation). We leave the technical development of these
concepts for future work.

The rest of the paper is organized as follows. Section~\ref{sec:example} describes a representative example
which we use throughout the paper to explain important concepts.
Section~\ref{sec:definitions} gives formal definitions for program actions as
actual causes of security violations. We apply the causal analysis to
the running example in Section~\ref{sec:application}. We discuss the use of our causal analysis techniques for providing explanations and assigning blame in Section~\ref{sec:domains}. We survey
additional related work in Section~\ref{sec:related} and conclude in
Section~\ref{sec:conclusion}.


\section{Motivating example} \label{sec:example}

In this section we describe an example protocol designed to increase
accountability in the current public key infrastructure. We use the
protocol later to illustrate key concepts in defining causality.

\paragraph{Security protocol}  
Consider an authentication protocol in which a user (\User1)
authenticates to a server (\Server1) using a pre-shared password over
an adversarial network. \User1\ sends its user-id to \Server1\ and
obtains a public key signed by
\Server1. However, \User1\ would need
inputs from additional sources when \Server1\ sends its public key for
the first time in a protocol session to verify that the key is indeed
bound to \Server1's identity. In particular, \User1\ can verify the
key by contacting multiple notaries in the spirit
of \emph{Perspectives}~\cite{wendlandt2008}.  For simplicity, we
assume \User1\ verifies \Server1's public key with three authorized
notaries---\Notary1, \Notary2, \Notary3---and accepts the key if and
only if the majority of the notaries say that the key is
legitimate. To illustrate some of our ideas, we also consider a
parallel protocol where two parties (\User2 and \User3) communicate
with each other.

We assume that the prescribed programs
for \Server1, \User1, \Notary1, \Notary2, \Notary3, \User2 and \User3
impose the following requirements on their behavior: (i) \Server1\
stores \User1's password in a hashed form in a secure private memory
location.  (ii)
\User1\ requests access to the account by sending
an encryption of the password
(along with its identity and a timestamp) to \Server1\ after verifying
\Server1's public key with a majority of the notaries.  (iii) The notaries retrieve the key from
their databases and attest the key correctly. (iv) \Server1\ decrypts
and computes the hashed value of the password. (v) \Server1\ matches the computed hash value with the previously stored value in
the memory location when the account was first created; if the two
hash values match, then \Server1\ grants access to the account
to \User1. (vi) In parallel, \User2 generates and sends a nonce
to \User3. (vii) \User3 generates a nonce and responds to \User2.

\paragraph{Security property}
The prescribed programs in our example aim to achieve the property
that only the user who created the account and password (in this case, \User1) gains access
to the account. 

\paragraph{Compromised Notaries Attack}
We describe an attack scenario and use it to illustrate nuances in
formalizing program actions as actual causes. \User1\
executes its prescribed program. \User1\ sends an access
request to \Server1. An \Adversary\ intercepts the message and sends a
public key to \User1\ pretending to be \Server1. \User1\ checks
with \Notary1, \Notary2\ and \Notary3\ who falsely verify \Adversary's
public key to be \Server1's key. Consequently, \User1\ sends the
password to \Adversary. \Adversary\ then initiates a protocol with
{\Server1} and gains access to \User1's account. In parallel, \User2\
sends a request to \Server1 and receives a response
from \Server1. Following this interaction, \User2 forwards the message
to \User3.   We assume
that the actions of the parties are recorded on a \emph{log}, say
$l$. Note that this log contains a violation of the security property
described above since \Adversary\ gains access to an account owned
by \User1.

First, our definition finds \emph{program actions as causes} of
violations.  At a high-level, as mentioned in the introduction, our
definition applies in two phases. The first phase
(Section~\ref{sec:definitions}, Definition~\ref{definition:cause1'})
identifies a minimal prefix (Phase~1, \emph{minimality}) of the log
that can account for the violation i.e. we consider all scenarios
where the sequence of actions execute in the same order as on the log,
and test whether it suffices to recreate the violation in the absence
of all other actions (Phase~1, \emph{sufficiency}).  In
our example, this first phase will output a minimal prefix of log $l$
above.  In this case, the minimal prefix will not contain interactions
between \User2 and \User3 after \Server1 has granted access to
the \Adversary\ (the remaining prefix will still contain a
violation).

Second, a
nuance in defining the notion of \emph{sufficiency} (Phase~1,
Definition~\ref{definition:cause1'}) is to constrain the interactions
which are a part of the actual cause set in a manner that is
consistent with the interaction recorded on the log.  This constraint
on interactions is quite subtle to define and depends on how strong a
coupling we find appropriate between the log and possible
counterfactual traces in sufficiency: if the constraint is too weak
then the violation does not reappear in all sequences, thus missing
certain causes; if it is too strong it leads to counter-intuitive
cause determinations. For example, a weak notion of consistency is to
require that each program locally execute the same prefix in
sufficiency as it does on the log i.e. consistency w.r.t. program
actions for individual programs. This notion does not work because for
some violations to occur the \emph{order of interactions} on the log
among programs is important.  A notion that is too strong is to
require matching of the total order of execution of all actions across
all programs.  We present a formal notion of \emph{consistency} by
comparing log projections (Section~\ref{sec:support-defs}) that
balance these competing concerns.

Third, note that while Phase~1 captures a minimal prefix of the log sufficient for the violation, it might be possible to remove actions from this prefix which are merely required for a program execution to progress. For instance note that while all three notaries' actions are required for
\User1\ to progress (otherwise it would be stuck waiting
to receive a message) and the violation to occur, the actual message
sent by one of the notaries is irrelevant since it does not affect the
majority decision in this example. Thus, separating out actions which
are \emph{progress enablers} from those which provide information that
causes the violation is useful for fine-grained causal
determination. This observation motivates the final piece (Phase~2) of
our formal definition (Definition~\ref{definition:cause2a}).

Finally, notice that in this
example \Adversary, \Notary1, \Notary2, \Notary3, \Server1 and \User2
deviate from the protocol described above. However, the deviant
programs are not sufficient for the violation to occur without the
involvement of \User1, which is also a part of the causal set. We thus
seek a notion of sufficiency in defining a set of programs as a joint
actual cause for the violation.  Joint causation is also significant
in legal contexts~\cite{european2005principles}. For instance, it is
useful for holding liable a group of agents working together when none
of them satisfy the cause criteria individually but together their
actions are found to be a cause. The ability to distinguish between
joint and independent (i.e., different sets of programs that
independently caused the violation) causes is an important criterion
that we want our definition to satisfy. In particular, Phase~2 of our
definition helps identify independent causes. For instance, in our
example, we get three different independent causes depending on which
notary's action is treated as a progress enabler.  Our ultimate goal
is to use the notion of actual cause as a building block for
accountability --- the independent vs. joint cause distinction is
significant when making deliberations about accountability and
punishment for liable parties.  We can use the result of our causal determinations to further remove
deviants whose actions are required for the violation to occur but
might not be blameworthy
(Section~\ref{sec:domains}).


\section{Actual Cause Definition}\label{sec:definitions} 
We present our language model in Section~\ref{sec:model}, auxiliary
notions in Section~\ref{sec:support-defs}, properties of interest to
our analysis in Section~\ref{sec:properties}, and the formal
definition of program actions as actual causes in
Section~\ref{sec:cause1}.



\subsection{Model} 
\label{sec:model} 

We model programs in a simple concurrent language, which we call
$L$. The language contains sequential expressions, $e$, that execute
concurrently in threads and communicate with each other
through \action{send} and \action{recv} commands. Terms, $t$, denote
messages that may be passed through expressions or across
threads. Variables $x$ range over terms. An expression is a sequence
of actions, $\alpha$. An action may do one of the following: execute a
primitive function $\zeta$ on a term $t$ (written $\zeta(t)$), or send
or receive a message to another thread (written $\action{send}(t)$ and
$\action{recv}()$, respectively). We also include very primitive
condition checking in the form of $\action{assert}(t)$.
\[\begin{array}{lllll}
\mbox{Terms} & t & ::= & x ~|~ \ldots\\
 
\mbox{Actions} & \alpha & ::= & 
\zeta(t) ~|~ \action{send}(t) 
~|~ \action{recv}()\\

\mbox{Expressions}  & e  & ::= & t ~|~ (b: x = \alpha); e_2 ~|~ \action{assert}(t); e

\end{array}\]

Each action $\alpha$ is labeled with a unique line number, written
$b$. Line numbers help define traces later. We omit line numbers when
they are irrelevant. Every action and expression in the language
evaluates to a term and potentially has side-effects. The term
returned by action $\alpha$ is bound to $x$ in evaluating $e_2$ in the
expression $(b: x= \alpha); e_2$.

Following standard models of protocols, $\action{send}$ and
$\action{recv}$ are untargeted in the operational semantics: A message
sent by a thread may be received by any thread. Targeted communication
may be layered on this basic semantics using cryptography. For
readability in examples, we provide an additional first argument to
$\action{send}$ and $\action{recv}$ that specifies the \emph{intended}
target (the operational semantics ignore this intended target). Action
$\action{send}(t)$ always returns $0$ to its continuation.

Primitive functions $\zeta$ model thread-local computation like
arithmetic and cryptographic operations. Primitive functions can also
read and update a \emph{thread-local state}, which may model local
databases, permission matrices, session information, etc. If the term
$t$ in $\action{assert}(t)$ evaluates to a non-true value, then its
containing thread gets stuck forever, else $\action{assert}(t)$ has no
effect.

We abbreviate $(b: x = \alpha); x$ to $b: \alpha$ and $(b: x
= \alpha); e$ to $(b: \alpha); e$ when $x$ is not free in $e$. As an
example, the following expression receives a message, generates a
nonce (through a primitive function $\action{new}$) and sends the
concatenation of the received message and the nonce on the network to
the intended recipient $j$ (line numbers are omitted here).
\[
\begin{array}{l}
m = \action{recv}();  ~~// \text{receive message, bind to $m$}\\
n = \action{new}(); ~~// \text{generate nonce, bind to $n$}\\
\action{send}(j, (m,n));  ~~// \text{send $(m,n)$ to $j$}\\ 
\end{array}
\]
For the purpose of this paper, we limit attention to this simple
expression language, without recursion or branching. Our definition of
actual cause is general and applies to any formalism of
(non-deterministic) interacting agents, but the auxiliary definitions
of log projection and the function $\sf dummify$ introduced later
must be modified. 

\paragraph{Operational Semantics}
\label{app:operational:semantics}
The language $L$'s operational semantics define how a collection
of \emph{threads} execute concurrently. Each thread $T$ contains a
unique thread identifier $i$ (drawn from a universal set of such
identifiers), the executing expression $e$, and a local store.
A \emph{configuration} ${\cal C} = T_1,\ldots,T_n$ models the threads
$T_1,\ldots,T_n$ executing concurrently. Our reduction relation is
written ${\cal C} \rightarrow {\cal C'}$ and defined in the standard
way by interleaving small steps of individual threads (the reduction
relation is parametrized by a semantics of primitive functions
$\zeta$). Importantly, each reduction can either be internal to a
single thread or a \emph{synchronization} of a \action{send} in one
thread with a \action{recv} in another thread.

We make the locus of a reduction explicit by annotating the reduction
arrow with a \emph{label} $\redlabel$. This is written ${\cal
C} \xrightarrow{\redlabel} {\cal C'}$. A label is either the
identifier of a thread $i$ paired with a line number $b$, written
$\langle i, b \rangle$ and representing an internal reduction of some
$\zeta(t)$ in thread $i$ at line number $b$, or a tuple
$\langle \langle i_s, b_s \rangle, \langle i_r, b_r \rangle \rangle$,
representing a synchronization between a \action{send} at line number
$b_s$ in thread $i_s$ with a \action{recv} at line number $b_r$ in
thread $i_r$, or $\epsilon$ indicating an unobservable reduction (of
$t$ or $\action{assert}(t)$) in some thread. Labels $\langle i,
b \rangle$ are called \emph{local labels}, labels $\langle \langle
i_s, b_s \rangle, \langle i_r, b_r \rangle \rangle$ are
called \emph{synchronization labels} and labels $\epsilon$ are
called \emph{silent labels}.

An \emph{initial configuration} can be described by a triple $\langle
I, {\cal A}, \Sigma \rangle$, where $I$ is a finite set of thread
identifiers, ${\cal A}: I \rightarrow \mbox{Expressions}$ and $\Sigma:
I \rightarrow \mbox{Stores}$. This defines an initial configuration of
$|I|$ threads with identifiers in $I$, where thread $i$ contains the
expression ${\cal A}(i)$ and the store $\Sigma(i)$. In the sequel, we
identify the triple $\langle I, {\cal A}, \Sigma \rangle$ with the
configuration defined by it. We also use a configuration's identifiers
to refer to its threads.

\begin{definition}[Run]
Given an initial configuration ${\cal C}_0 = \langle I, {\cal
A}, \Sigma \rangle$, a run is a finite sequence of labeled reductions
${\cal C}_0 \xrightarrow{\redlabel_1} {\cal
C}_1 \ldots \xrightarrow{\redlabel_{n}} {\cal C}_n$.
\end{definition}

A pre-trace is obtained by projecting only the stores from each
configuration in a run.

\begin{definition}[Pre-trace]
Let ${\cal C}_0 \xrightarrow{\redlabel_1} {\cal
C}_1 \ldots \xrightarrow{\redlabel_{n}} {\cal C}_n$ be a run and let
$\Sigma_i$ be the store in configuration ${\cal C}_i$. Then, the
pre-trace of the run is the sequence $(\_, \Sigma_0),
(\redlabel_1,\Sigma_1), \ldots, (\redlabel_n, \Sigma_n)$.
\end{definition}

If $r_i = \epsilon$, then the $i$th step is an unobservable reduction
in some thread and, additionally, $\Sigma_{i-1} = \Sigma_i$. A trace
is a pre-trace from which such $\epsilon$ steps have been dropped.

\begin{definition}[Trace]
The trace of the pre-trace $(\_, \Sigma_0),
(\redlabel_1,\Sigma_1), \ldots, (\redlabel_n, \Sigma_n)$ is the
subsequence obtained by dropping all tuples of the form
$(\epsilon, \Sigma_i)$. Traces are denoted with the letter $t$.
\end{definition}



\subsection{Logs and their projections} \label{sec:support-defs}
To define actual causation, we find it convenient to introduce the
notion of a log and the log of a trace, which is just the sequence of
non-silent labels on the trace. A log is a sequence of labels other
than $\epsilon$. The letter $l$ denotes logs. 

\begin{definition}[Log]
Given a trace $t = (\_,\Sigma_0), (r_1, \Sigma_1), \ldots,
(r_n, \Sigma_n)$, the log of the trace, $log(t)$, is the sequence of
$r_1,\ldots,r_{m}$. (The trace $t$ does not contain a label $r_i$ that
equals $\epsilon$, so neither does $log(t)$.)
\end{definition}

We need a few more straightforward
definitions on logs in order to define actual causation.

\begin{definition}[Projection of a log]
Given a log $l$ and a thread identifier $i$, the projection of $l$ to
$i$, written $l|_i$ is the subsequence of all labels in $l$
that mention $i$. Formally,
\[\begin{array}{l@{~}l@{~}ll}
\bullet|_i & = & \bullet \\
(\langle i, b \rangle :: l)|_i & = & \langle i, b \rangle :: (l|_i) \\
(\langle i, b \rangle :: l)|_i & = & l|_i ~~~~~~ \mbox{if } i\not= j\\
(\langle \langle i_s, b_s \rangle, \langle i_r, b_r \rangle \rangle ::
l)|_i & = & \langle \langle i_s, b_s \rangle, \langle i_r,
b_r \rangle \rangle :: (l|_i) \\
& &  \mbox{if } i_s = i \mbox{ or } i_r = i\\
(\langle \langle i_s, b_s \rangle, \langle i_r, b_r \rangle \rangle ::
l)|_i & = & l|_i \\ 
& & \mbox{if } i_s \not= i \mbox{ and } i_r \not= i
\end{array}\]
\end{definition}

\begin{definition}[Projected prefix]
We call a log $l'$ a
\emph{projected prefix} of the log $l$, written $l' \leq_p l$, if for every
thread identifier $i$, the sequence $l'|_i$ is a prefix of the
sequence $l|_i$.
\end{definition}

The definition of projected prefix allows the relative order of events
in two different non-communicating threads to differ in $l$ and $l'$
but Lamport's happens-before order of actions~\cite{lamport1978} in $l'$ must be
preserved in $l$. Similar to projected prefix, we define projected
sublog.

\begin{definition}[Projected sublog]
We call a log $l'$ a \emph{projected sublog} of the log $l$, written
$l' \sqsubseteq_p l$, if for every thread identifier $i$, the sequence
$l'|_i$ is a subsequence of the sequence $l|_i$ (i.e., dropping some
labels from $l|_i$ results in $l'|_i$).
\end{definition}



\subsection{Properties of Interest}
\label{sec:properties}

A \emph{property} is a set of (good) traces and violations are traces
in the complement of the set.  Our goal is to define the cause of a
violation of a property. We are specifically interested in ascribing
causes to violations of safety
properties~\cite{lamport77:proving} because safety
properties encompass many relevant security requirements. We
recapitulate the definition of a safety property below. Briefly, a
property is safety if it is fully characterized by a set of finite
violating prefixes of traces. Let $U$ denote the universe of all
possible traces.

\begin{definition}[Safety property~\cite{alpern85:liveness}]
A property $P$ (a set of traces) is a safety property, written ${\sf
Safety}(P)$, if $\forall t \not \in P.~\exists t' \in U.~(t' \mbox{ is
a prefix of } t) \cconj (\forall t'' \in U. ~(t' \cdot t'' \not \in
P))$.
\end{definition}

As we explain soon, our causal analysis ascribes thread actions (or
threads) as causes. One important requirement for such analysis is
that the property be closed under reordering of actions in different
threads if those actions are not related by Lamport's happens-before
relation~\cite{lamport1978}. For properties that are not closed in this sense,
the \emph{global order} between actions in a race condition may be a
cause of a violation. Whereas causal analysis of race conditions may
be practically relevant in some situation, we limit attention only to
properties that are closed in the sense described here. We call such
properties reordering-closed or ${\sf RC}$.

\begin{definition}[Reordering-equivalence]
Two traces $t_1, t_2$ starting from the same initial configuration are
called reordering-equivalent, written $t_1 \sim t_2$ if for each
thread identifier $i$, $log(t_1)|_i = log(t_2)|_i$. Note that $\sim$
is an equivalence relation on traces from a given initial
configuration. Let $[t]_\sim$ denote the equivalence class of $t$.
\end{definition}

\begin{definition}[Reordering-closed property]
A property $P$ is called reordering-closed, written ${\sf RC}(P)$, if
$t \in P$ implies $[t]_\sim \subseteq P$. Note that ${\sf RC}(P)$ iff
${\sf RC}(\neg P)$.
\end{definition}

\subsection{Program Actions as Actual Causes} \label{sec:cause1}

In the sequel, let $\varphi_V$ denote the \emph{complement} of a
reordering-closed safety property of interest. (The subscript $V$
stands for ``violations''.) Consider a trace $t$ starting from the
initial configuration ${\cal C}_0 = \langle I, {\cal
A}, \Sigma \rangle$. If $t \in \varphi_V$, then $t$ violates the
property $\neg \varphi_V$.

\begin{definition}[Violation]
A violation of the property $\neg \varphi_V$ is a trace
$t \in \varphi_V$.
\end{definition}

Our definition of actual causation identifies a subset of actions in
$\{{\cal A}(i) ~|~ i \in I\}$ as the cause of a violation
$t \in \varphi_V$. The definition applies in two phases. The first
phase identifies what we call \emph{Lamport causes}. A Lamport cause
is a minimal projected prefix of the log of a violating trace that can
account for the violation. In the second phase, we refine the log by
removing actions that are merely \emph{progress enablers}; the
remaining actions on the log are the \emph{actual action causes}. The
former contribute only indirectly to the cause by enabling the actual
action causes to make progress; the exact values returned by progress
enabling actions are irrelevant.

The following definition, called Phase 1, determines Lamport
causes. It works as follows. We first identify a projected prefix $l$
of the log of a violating trace $t$ as a potential candidate for a
Lamport cause. We then check two conditions on
$l$. The \emph{sufficiency} condition tests that the threads of the
configuration, when executed at least up to the identified prefix,
preserving all synchronizations in the prefix, suffice to recreate the
violation. The \emph{minimality} condition tests that the identified
Lamport cause contains no redundant actions.

\begin{definition}[Phase 1: Lamport Cause of Violation] \label{definition:cause1'} 
Let $t \in \varphi_V$ be a trace starting from ${\cal C}_0 = \langle
I, {\cal A}, \Sigma \rangle$ and $l$ be a projected prefix of
$log(t)$, i.e., $l \leq_p log(t)$. We say that $l$ is the Lamport
cause of the violation $t$ of $\varphi_V$ if the following hold:
\begin{enumerate} 
\item \label{sufficiency1} \textbf{(Sufficiency)} 
Let $T$ be the set of traces starting from ${\cal C}_0$ whose logs
contain $l$ as a projected prefix, i.e., $T = \{t' ~|~ t' \mbox{ is a
trace starting from } {\cal C}_0 \mbox{ and } l \leq_p
log(t')\}$. Then, every trace in $T$ has the violation $\varphi_V$,
i.e., $T \subseteq \varphi_V$. (Because $t \in T$, $T$ is non-empty.)
\item \label{minimality1} \textbf{(Minimality)} 
No proper prefix of $l$ satisfies condition~\ref{sufficiency1}.
\end{enumerate} 
\end{definition}

At the end of Phase 1, we obtain one or more minimal prefixes $l$
which contain program actions that are sufficient for the
violation. These prefixes represent independent Lamport causes of the
violation. In the Phase 2 definition below, we further identify a
sublog $a_d$ of each $l$, such that the program actions in $a_d$ are
actual causes and the actions in $l \backslash a_d$ are progress
enabling actions which only contribute towards the \emph{progress} of
actions in $a_d$ that cause the violation. In other words, the actions
not considered in $a_d$ contain all labels whose actual returned
values are irrelevant.

Briefly, here's how our Phase 2 definition works. We first pick a
candidate projected sublog $a_d$ of $l$, where log $l$ is a Lamport
cause identified in Phase 1. We consider counterfactual traces
obtained from initial configurations in which program actions omitted
from $a_d$ are replaced by actions that do not have any effect other
than enabling the program to progress (referred to as no-op).  If a
violation appears in all such counterfactual traces, then this sublog
$a_d$ is a good candidate. Of all such good candidates, we choose
those that are minimal.

The key technical difficulty in writing this definition is replacing
program actions omitted from $a_d$ with no-ops. We cannot simply erase
any such action because the action is expected to return a term which
is bound to a variable used in the action's continuation. Hence, our
approach is to substitute the variables binding the returns of
no-op'ed actions with arbitrary (side-effect free) terms
$t$. Formally, we assume a function $f:
I \times \mbox{LineNumbers} \rightarrow \mbox{Terms}$ that for line
number $b$ in thread $i$ suggests a suitable term $f(i,b)$ that must
be returned if the action from line $b$ in thread $i$ is replaced with
a no-op. In our cause definition we universally quantify over $f$,
thus obtaining the effect of a no-op. For technical convenience, we
define a syntactic transform called $\sf{dummify}()$ that takes an
initial configuration, the chosen sublog $a_d$ and the function $f$,
and produces a new initial configuration obtained by erasing actions
not in $a_d$ by terms obtained through $f$.

\begin{definition}[Dummifying transformation]\label{def:dummify}
Let $\langle I, {\cal A}, \Sigma \rangle$ be a configuration and let
$a_d$ be a log.  Let $f:
I \times \mbox{LineNumbers} \rightarrow \mbox{Terms}$. The dummifying
transform ${\sf dummify}(I, {\cal A}, \Sigma, a_d, f)$ is the initial
configuration $\langle I, {\cal D}, \Sigma\rangle$, where for all
$i \in I$, ${\cal D}(i)$ is ${\cal A}(i)$ modified as follows:
\begin{itemize} 
\item 
If $(b: x = \action{send}(t)); e$ appears in ${\cal A}(i)$ but
    $\langle i, b \rangle$ does not appear in $a_d$, then replace $(b:
    x = \action{send}(t)); e$ with $e[0/x]$ in ${\cal A}(i)$.
\item 
If $(b: x = \alpha); e$ appears in ${\cal A}(i)$ but $\langle i,
    b \rangle$ does not appear in $a_d$ and
    $\alpha \not= \action{send}(\_)$, then replace $(b: x = \alpha);
    e$ with $e[f(i,b)/x]$ in ${\cal A}(i)$.
\end{itemize}
\end{definition}

We now present our main definition of actual causes.

\begin{definition}[Phase 2: Actual Cause of Violation] 
\label{definition:cause2a} 
Let $t \in \varphi_V$ be a trace from the initial configuration
$\langle I, {\cal A}, \Sigma \rangle$ and let the log $l \leq_p log(t)$
be a Lamport cause of the violation determined by
Definition~\ref{definition:cause1'}. Let $a_d$ be a projected sublog
of $l$, i.e., let $a_d \sqsubseteq_p l$. We say that $a_d$ is the
actual cause of violation $t$ of $\varphi_V$ if the following hold:
\begin{enumerate}
\item \label{sufficiency2} (\textbf{Sufficiency'})
Pick any $f$. Let ${\cal C}_0' = {\sf dummify}(I, {\cal A}, \Sigma,
a_d, f)$ and let $T$ be the set of traces starting from ${\cal C}_0'$
whose logs contain $a_d$ as a projected sublog, i.e., $T = \{t' ~|~
t' \mbox{ is a trace starting from } {\cal C}_0' \mbox{ and }
a_d \sqsubseteq_p log(t')\}$. Then, $T$ is non-empty and every trace
in $T$ has the violation $\varphi_V$, i.e, $T \subseteq \varphi_V$.
\item \label{minimality2} (\textbf{Minimality'})
No proper sublog of $a_d$ satisfies condition~\ref{sufficiency2}.
\end{enumerate}
\end{definition}

At the end of Phase~2, we obtain one or more sets of actions
$a_d$. These sets are deemed the independent actual causes of the
violation $t$.

The following theorem states that for all safety properties that are
re-ordering closed, the Phase 1 and Phase 2 definitions always
identify at least one Lamport and at least one actual cause.

\begin{theorem}
Suppose $\varphi_V$ is reordering-closed and the complement of a
safety property, i.e., ${\sf RC}(\varphi_V)$ and ${\sf
safety}(\neg \varphi_V)$. Then, for every $t \in \varphi_V$: (1)~Our
Phase 1 definition (Definition~\ref{definition:cause1'}) finds a
Lamport cause $l$, and (2)~For every such Lamport cause $l$, the Phase
2 definition (Definition~\ref{definition:cause2a}) finds an actual
cause $a_d$.
\end{theorem}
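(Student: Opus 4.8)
The plan is to prove the two parts separately, in each case by exhibiting a concrete candidate satisfying sufficiency and then invoking a minimality/well-foundedness argument to extract the actual cause.

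For part (1), the first step is to observe that $\log(t)$ is itself a projected prefix of $\log(t)$ (reflexivity of $\leq_p$), so it is a candidate for Phase 1. I would then check the sufficiency condition for $l = \log(t)$: let $T = \{t' \mid t' \text{ starts from } {\cal C}_0,\ \log(t) \leq_p \log(t')\}$. For any such $t'$, since $\log(t) \leq_p \log(t')$ and $\log(t)$ is already a (projected) prefix of itself along each thread, each thread in $t'$ executes at least the full sequence of actions it executes in $t$, in an order consistent with the happens-before order of $t$. Here I would use ${\sf safety}(\neg\varphi_V)$ together with ${\sf RC}(\varphi_V)$: the violating trace $t$ has a finite bad prefix, and because $\varphi_V$ is reordering-closed, every trace reordering-equivalent to a trace extending that bad prefix is also in $\varphi_V$; since every $t'\in T$ contains (a reordering of) the relevant portion of $t$'s log as a prefix, $t'$ extends a bad prefix and hence $t' \in \varphi_V$. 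Thus $l = \log(t)$ satisfies sufficiency, so the set of projected prefixes of $\log(t)$ satisfying sufficiency is non-empty. Since the proper-prefix relation on this finite set of finite logs is well-founded, a minimal element exists, and that element is a Lamport cause by Definition~\ref{definition:cause1'}.

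For part (2), fix a Lamport cause $l$. The candidate to start from is $a_d = l$ itself, viewed as a projected sublog of $l$ (reflexivity of $\sqsubseteq_p$). With $a_d = l$, the dummifying transform leaves every action occurring in $l$ intact and only replaces, in each thread, the actions \emph{not} mentioned in $l$; but since $l$ is a projected prefix of $\log(t)$, for each thread $i$ the labels in $l|_i$ are exactly a prefix of $l|_i = \log(t)|_i$ restricted to that prefix, so the dummified configuration ${\cal C}_0'$ differs from ${\cal C}_0$ only by no-op'ing the suffix of each thread's program beyond the point reached in $l$. I would then argue that any trace $t'$ from ${\cal C}_0'$ with $a_d \sqsubseteq_p \log(t')$ still drives every thread through the full prefix $l|_i$ (the dummified tail cannot interfere, since send returns $0$ and the substituted terms are side-effect-free and the asserts on the prefix are unchanged), so $l \leq_p \log(t')$ after lifting $t'$ to the corresponding run of the original configuration — more carefully, $t'$ restricted to the non-dummified portion is a trace of ${\cal C}_0$ whose log contains $l$ as a projected prefix, hence lies in $\varphi_V$ by sufficiency of $l$ from Phase 1; reordering-closure again handles the ordering slack. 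Also $T$ is non-empty because $t$ itself (with its tail suitably dummified) witnesses a member. Thus $a_d = l$ satisfies Sufficiency', the set of projected sublogs of $l$ satisfying Sufficiency' is non-empty, and well-foundedness of the proper-sublog relation on this finite set yields a minimal element, which is an actual cause by Definition~\ref{definition:cause2a}.

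The main obstacle, and the step deserving the most care, is the sufficiency argument in part (2): showing that no-op'ing the actions outside $a_d$ (here the tail beyond $l$, but in the general minimality step an arbitrary projected sublog) cannot destroy the violation. One must check that the dummified actions genuinely have no observable effect — send returning $0$, the $f(i,b)$ terms being side-effect-free, and crucially that no \action{assert} that was passing on $t$ is turned into a stuck thread in a way that removes a synchronization needed to reach the violation — and that any trace of ${\cal C}_0'$ respecting $a_d \sqsubseteq_p \log(t')$ can be matched, up to reordering-equivalence, with a trace of ${\cal C}_0$ respecting $l \leq_p \log(t'')$. Getting the projection bookkeeping right (prefix vs. subsequence, and the interaction of $\sqsubseteq_p$ with happens-before) is the delicate part; once the correspondence between dummified and original traces is established, the conclusion follows from Phase 1 sufficiency plus ${\sf RC}(\varphi_V)$ exactly as in part (1).
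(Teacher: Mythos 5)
Your proposal is correct and follows essentially the same route as the paper: exhibit one sufficient candidate (a projected prefix of $\log(t)$ for Phase~1, and $a_d = l$ for Phase~2, where dummification leaves the $l$-prefix of every thread untouched) and then extract a minimal element from the finite set of sufficient candidates. The only inessential variations are that you take $l=\log(t)$ rather than the log of the bad prefix $t_0$ guaranteed by ${\sf safety}(\neg\varphi_V)$ (the same bad-prefix-plus-${\sf RC}(\varphi_V)$ argument is then needed to verify sufficiency), and in Phase~2 you transfer a dummified trace back to ${\cal C}_0$ and invoke Phase~1 sufficiency directly--also handling non-emptiness of $T$ a bit more carefully than the paper's ``pick $t'=t$''--whereas the paper re-runs the safety-plus-reordering-closure argument on $s = t_0\cdot t''$.
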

\begin{proof} 
(1)~Pick any $t \in \varphi_V$. We follow the Phase 1 definition. It
suffices to prove that there is a log $l \leq_p log(t)$ that satisfies
the sufficiency condition. Since ${\sf safety}(\neg \varphi_V)$,
there is a prefix $t_0$ of $t$ s.t. for all $t_1 \in U$,
$t_0 \cdot t_1 \in \varphi_V$. Choose $l = log(t_0)$. Since $t_0$ is a
prefix of $t$, $l = log(t_0) \leq_p log(t)$.
To prove sufficiency, pick any trace $t'$ s.t. $l \leq_p
log(t')$. It suffices to prove $t' \in \varphi_V$. Since $l \leq_p
log(t')$, for each $i$, $log(t')|_i = l|_i \cdot l'_i$ for some
$l'_i$. Let $t''$ be the (unique) subsequence of $t'$ containing all
labels from the logs $\{l_i'\}$. Consider the trace $s = t_0 \cdot
t''$. First, $s$ extends $t_0$, so $s \in \varphi_V$. Second, $s \sim
t'$ because $log(s)|_i = l|_i \cdot l'_i = log(t_0)|_i \cdot
log(t'')|_i = log(t_0 \cdot t'')|_i = log(t')|_i$. Since ${\sf
RC}(\varphi_V)$, $t' \in \varphi_V$. 

(2)~Pick any $t \in \varphi_V$ and let $l$ be a Lamport cause of $t$
as determined by the Phase 1 definition. Following the Phase~2
definition, we only need to prove that there is at least one
$a_d \sqsubseteq_p l$ that satisfies the sufficiency' condition. We
choose $a_d = l$. To show sufficiency', pick any $f$. Because $a_d =
l$, $a_d$ specifies an initial prefix of every ${\cal A}(i)$ and the
transform ${\sf dummify}()$ has no effect on this prefix.  First, we
need to show that at least one trace $t'$ starting from ${\sf
dummify}(I,{\cal A},\Sigma,a_d,f)$ satisfies $a_d \sqsubseteq_p
log(t')$. For this, we can pick $t' = t$. Second, we need to prove
that any trace $t'$ starting from ${\sf dummify}(I,{\cal
A},\Sigma,a_d,f)$ s.t. $a_d \sqsubseteq_p log(t')$ satisfies
$t' \in \varphi_V$. Pick such a $t'$. Let $t_0$ be the prefix of $t$
corresponding to $l$. Then, $log(t_0)|_i = l|_i$ for each $i$. It
follows immediately that for each $i$, $t'|_i = t_0|_i \cdot t''_i$
for some $t''_i$. Let $t''$ be the unique subsequence of $t'$
containing all labels from traces $\{t''_i\}$. Let $s = t_0 \cdot
t''$. First, because for each $i$, $l|_i = log(t_0)|_i$, $l \leq_p
log(t_0)$ trivially.  Because $l$ is a Lamport cause, it satisfies the
sufficiency condition of Phase 1, so $t_0 \in \varphi_V$. Since ${\sf
safety}(\neg \varphi_V)$, and $s$ extends $t_0$,
$s \in \varphi_V$. Second, $s \sim t'$ because $log(s)|_i =
log(t_0)|_i \cdot log(t'')|_i = log(t')|_i$ and both $s$ and $t'$ are
traces starting from the initial configuration ${\sf dummify}(I,{\cal
A},\Sigma,a_d,f)$. Hence, by ${\sf RC}(\varphi_V)$,
$t' \in \varphi_V$.
\end{proof}

Our Phase 2 definition identifies a set of program actions as causes
of a violation. However, in some applications it may be necessary to
ascribe thread identifiers (or programs) as causes. This can be
straightforwardly handled by lifting the Phase 2 definition: A thread
$i$ (or ${\cal A}(i)$) is a cause if one of its actions appears in
$a_d$.

\begin{definition}[Program Cause of Violation] \label{definition:cause2b} 
Let $a_d$ be an actual cause of violation $\varphi_V$ on trace $t$
starting from $\langle I, {\cal A}, \Sigma\rangle$. We say that the
set $X \subseteq I$ of thread identifiers is a cause of the violation
if $X = \{i ~|~ i \mbox{ appears in } a_d\}$.
\end{definition}

\paragraph{Remarks}

We make a few technical observations about our definitions of
cause. First, because Lamport causes
(Definition~\ref{definition:cause1'}) are projected \emph{prefixes},
they contain all actions that occur before any action that actually
contributes to the violation. Many of actions in the Lamport cause may
not contribute to the violation intuitively. Our actual cause
definition filters out such ``spurious'' actions. As an example,
suppose that a safety property requires that the value $1$ never be
sent on the network. The (only) trace of the program $x = 1; y = 2;
z=3; \action{send}(x)$ violates this property. The Lamport cause of
this violation contains all four actions of the program, but it is
intuitively clear that the two actions $y = 2$ and $z = 3$ do not
contribute to the violation. Indeed, the actual cause of the violation
determined by Definition~\ref{definition:cause2a} does not contain
these two actions; it contains only $x = 1$ and $\action{send}(x)$,
both of which obviously contribute to the violation.

Second, our definition of dummification is based on a program
transformation that needs line numbers. One possibly unwanted
consequence is that our traces have line numbers and, hence, we could,
in principle, specify safety properties that are sensitive to line
numbers. However, our definitions of cause are closed under bijective
renaming of line numbers, so if a safety property is insensitive to
line numbers, the actual causes can be quotiented under bijective
renamings of line numbers.

Third, our definition of actual cause
(Definition~\ref{definition:cause2a}) separates actions whose return
values are relevant to the violation from those whose return values
are irrelevant for the violation. This is closely related to
noninterference-like security definitions for information flow
control, in particular, those that separate input presence from input
content~\cite{rafnsson2012}. Lamport causes
(Definition~\ref{definition:cause1'}) have a trivial connection to
information flow: If an action does not occur in any Lamport cause of
a violation, then there cannot be an information flow from that action
to the occurrence of the violation.


\section{Causes of Authentication Failures} \label{sec:application} 

In this section, we model an instance of our running example based on
passwords (Section~\ref{sec:example}) in order to demonstrate our actual cause definition. As
explained in Section~\ref{sec:example}, we consider a protocol session
where \Server1, \User1, \User2, \User3\ and multiple notaries interact over an
adversarial network to establish access over a password-protected
account. We describe a formal model of the protocol
in our language, examine the attack scenario from
Section~\ref{sec:example} and provide a cause analysis using the
definitions from Section~\ref{sec:definitions}.

\subsection{Protocol Description}
\label{subsec:protocol}
We consider our example protocol with eight threads
named \{\Server1, \User1, \Adversary, \Notary1, \Notary2, \Notary3, \User2, \User3\}. In
this section, we briefly describe the protocol and the programs
specified by the protocol for each of these threads. For this purpose,
we assume that we are provided a function ${\cal N}:
I \rightarrow \mbox{Expressions}$ such that ${\cal N}(i)$ is the
program that \emph{ideally should have been} executing in the thread
$i$. For each $i$, we call ${\cal N}(i)$ the \emph{norm} for thread $i$. The
violation is caused because some of the executing programs are
different from the norms. These actual programs, called ${\cal A}$ as
in Section~\ref{sec:definitions}, are shown later. The norms are shown
here to help the reader understand what the ideal protocol is and also
to facilitate some of the development in
Section~\ref{sec:domains}. The appendix describes an
expansion of this example with more than the eight threads considered
here to illustrate our definitions better. The
proof included in the appendix deals with timestamps and
signatures.

The norms in Figure~\ref{fig:norms1'} and the actuals in Figure~\ref{fig:actuals1'} assume that {\User1}'s account
(called $acct$ in {\Server1}'s program) has already been created and
that \User1's password, $pwd$ is associated with {\User1}'s user id,
$uid$. This association (in hashed form) is stored in {\Server1}'s
local state at pointer $mem$. The norm for {\Server1} is to wait for a
request from an entity, respond with its (\Server1's) public key, wait
for a username-password pair encrypted with that public key and grant
access to the requester if the password matches the previously stored
value in \Server1's memory at $mem$. To grant access, \Server1\ adds an
entry into a private access matrix, called $P$. (A separate server
thread, not shown here, allows {\User1} to access its account if this
entry exists in~$P$.)

The norm for {\User1} is to send an access request to {\Server1}, wait
for the server's public key, verify that key with three notaries and
then send its password $pwd$ to {\Server1}, encrypted under \Server1's
public key. On receiving \Server1's public key, {\User1} initiates a
protocol with the three notaries and accepts or rejects the key based
on the response of a majority of the notaries. For simplicity, we omit
a detailed description of this protocol between {\User1} and the
notaries that authenticates the notaries and ensures freshness of
their responses. These details are included in our appendix. In
parallel, the norm for \User2\ is to generate and send a nonce to \User3.  The norm for \User3 is to receive a message from \User2,
generate a nonce and send it to \User2. 

Each notary has a private database of \textit{(public\_key,
principal)} tuples. The notaries' norms assume that this database has
already been created correctly. When {\User1} sends a request with a
public key, the notary responds with the principal's identifier after
retrieving the tuple corresponding to the key from its database.

\paragraph{Notation}  
The programs in this example use several primitive functions
$\zeta$. $\pred{Enc}(k,m)$ and $\pred{Dec}(k',m)$ denote encryption
and decryption of message $m$ with key $k$ and $k'$
respectively. $\pred{Hash}(m)$ generates the hash of term
$m$. $\pred{Sig}(k,m)$ denotes message $m$ signed with the key $k$,
paired with $m$ in the clear. $pub\_key\_i$ and $pvt\_key\_i$ denote
the public and private keys of thread $i$, respectively. For
readability, we include the intended recipient $i$ and expected sender
$j$ of a message as the first argument of $\action{send}(i,m)$ and
$\action{recv}(j)$ expressions. As explained earlier, $i$ and $j$ are
ignored during execution and a network adversary, if present, may
capture or inject any messages.

\paragraph{Security property}

The security property of interest to us is that if at time $u$, a
thread $k$ is given access to account $a$, then $k$ owns
$a$. Specifically, in this example, we are interested in case $a =
acct$ and $k = \User1$. This can be formalized by the following
logical formula, $\neg \varphi_V$:
\vspace{-0.5mm}
\begin{equation}\label{eqn:property}
 \forall u,k. ~(acct, k) \in
P(u) \imp (k = \User1)
\end{equation}
Here, $P(u)$ is the state of the access control matrix $P$ for
\Server1 at time $u$.

\begin{figure}[t]
\small
\begin{tabular}{p{0.5\textwidth}}
\framebox{\noindent\textbf{Norm ${\cal N}(\Server1)$:}}\\
\begin{tabular}{l}
$
\begin{array}{l}
1: \_ = \action{recv} (j);     ~~// \text{access req from thread $j$}\\
2: \action{send}(j, pub\_key\_\Server1);      ~~// \text{send public key to $j$}\\ 
3: s = \action{recv}(j);          ~~// \text{encrypted $uid,pwd$, thread id $J$}\\
4: (uid, pwd, J) = \pred{Dec}(pvt\_key\_\Server1, s); \\
5: t = \action{hash}(uid, pwd); \\
\mbox{  }~~\action{assert} (mem = t)  ~~// \text{compare hash with stored value}\\
6:  \action{insert}(P, (acct,J)); \\
\end{array} $\\
\end{tabular}

\framebox{\noindent\textbf{Norm ${\cal N}(\User1)$:}}\\
\begin{tabular}{l}
$
\begin{array}{l}
1: \action{send}(\Server1); ~~// \text{access request}\\
2: pub\_key = \action{recv}(\Server1); ~~//\text{key from $\Server1$}\\
3: \action{send}(\Notary1, pub\_key);\\
4: \action{send}(\Notary2, pub\_key);\\
5: \action{send}(\Notary3, pub\_key);\\
6: \pred{Sig}(pub\_key, l1) = \action{recv}(\Notary1); ~~//\text{notary1 responds }\\
7: \pred{Sig}(pub\_key, l2) = \action{recv}(\Notary2); ~~//\text{notary2 responds}\\
8: \pred{Sig}(pub\_key, l3) = \action{recv}(\Notary3); ~~//\text{notary3 responds}\\

 \mbox{  }~~\action{assert} (\text{At least two of \{l1,l2,l3\} equal \Server1}) \\
9: t = \pred{Enc}(pub\_key, (uid, pwd, \User1)); \\
10:\action{send}(\Server1, t); ~~// \text{send $t$ to \Server1}\\
\end{array} $
\end{tabular}\\

\framebox{\noindent\textbf{Norms ${\cal N}(\Notary1), {\cal N}(\Notary2), {\cal N}(\Notary3)$:}}\\
\begin{tabular}{l}
$
\begin{array}{l}
//\text{ $o$ denotes \Notary1, \Notary2 or \Notary3}\\
1:pub\_key = \action{recv}(j); \\
2:pr = \pred{KeyOwner}(pub\_key); ~~~// \text{lookup key owner} \\%
3:\action{send}(j, \pred{Sig}(pvt\_key\_o, (pub\_key, pr)); \\%
\end{array} $
\end{tabular}\\

\framebox{\noindent\textbf{Norm ${\cal N}(\User2)$:}}\\		\begin{tabular}{l}
		$
		\begin{array}{l}
		1: \action{send}(\User3); \\
		2: \_ = \action{recv}(\User3);
		\end{array} $ 
                \end{tabular}
		\\

	\framebox{\noindent\textbf{Norm ${\cal N}(\User3)$:}}\\		
	\begin{tabular}{l}
		$
		\begin{array}{l}
		1:  \_ = \action{recv}(\User2);\\
		2:  \action{send}(\User3); \\
		\end{array} $ 
                \end{tabular}

\end{tabular}

\caption{Norms for all threads. \Adversary's norm is the trivial empty program.}
\label{fig:norms1'}
\end{figure}

\begin{figure}[t]
\small
\begin{tabular}{p{0.5\textwidth}}
\framebox{\noindent\textbf{Actual ${\cal A}(\Server1)$:}}\\
\begin{tabular}{l}
$
\begin{array}{l}
1: \_ = \action{recv} (j);     ~~// \text{access req from thread $j$}\\
2: \action{send}(j, pub\_key\_\Server1);      ~~// \text{send public key to $j$}\\ 
3: \_ = \action{recv}(j);  ~~// \text{receive nonce from thread \User2}\\
4: \action{send}(j);~~//\text{send signed nonce}\\ 
5: s = \action{recv}(j);          ~~// \text{encrypted $uid,pwd$, thread id from $j$}\\
6: (uid, pwd, J) = \pred{Dec}(pvt\_key\_\Server1, s); \\
7: t = \action{hash}(uid, pwd); \\
\mbox{  }~~\action{assert} (mem = t)  \textbf{[A]}~~// \text{compare hash with stored value}\\
8:  \action{insert}(P, (acct,J)); \\
 
\end{array} $\\
\end{tabular}

\framebox{\noindent\textbf{Actual ${\cal A}(\User1)$:}}\\
\begin{tabular}{l}
$
\begin{array}{l}
1: \action{send}(\Server1); ~~// \text{access request}\\
2: pub\_key = \action{recv}(\Server1); ~~//\text{key from $\Server1$}\\

3: \action{send}(\Notary1, pub\_key);\\
4: \action{send}(\Notary2, pub\_key);\\
5: \action{send}(\Notary3, pub\_key);\\
6: \pred{Sig}(pub\_key, l1) = \action{recv}(\Notary1); ~~//\text{notary1 responds }\\
7: \pred{Sig}(pub\_key, l2) = \action{recv}(\Notary2); ~~//\text{notary2 responds}\\
8: \pred{Sig}(pub\_key, l3) = \action{recv}(\Notary3); ~~//\text{notary3 responds}\\

\mbox{  }~~\action{assert} (\text{At least two of \{l1,l2,l3\} equal \Server1}) \textbf{[B]}\\
9: t = \pred{Enc}(pub\_key, (uid, pwd, \User1)); \\%
10: \action{send}(\Server1, t); ~~// \text{send $t$ to \Server1}\\
\end{array} $
\end{tabular}\\

\framebox{\noindent\textbf{Actual ${\cal A}(\Adversary)$ }}\

		\begin{tabular}{l}
		$
		\begin{array}{l}
		1: \action{recv} (\User1);     ~~// \text{intercept access req from \User1}\\
		2: \action{send}(\User1, pub\_key\_A); ~~// \text{send key to User}\\ 
		3: s = \action{recv}(\User1); ~~// \text{pwd from \User1}\\
		4: (uid, pwd, \User1) = \pred{Dec}(pvt\_key\_A, s); ~~// \text{decrypt pwd}\\
		5: \action{send}(\Server1, uid); ~~// \text{access request to \Server1}\\
		6: pub\_key = \action{recv}(\Server1); ~~// \text{Receive \Server1's public key} \\
		7: t = \pred{Enc}(pub\_key, (uid, pwd, \Adversary)); // \text{encrypt pwd}\\
		8: \action{send}(\Server1, t); ~~// \text{pwd to \Server1} \\
		\end{array} $ \\
		\end{tabular}

		\framebox{\noindent\textbf{Actuals ${\cal A}(\Notary1), {\cal A}(\Notary2), {\cal N}(\Notary3)$:}}\\

		\begin{tabular}{l}
		$
\begin{array}{l}
//\text{ $o$ denotes \Notary1, \Notary2 or \Notary3}\\

1:pub\_key = \action{recv}(j); \\
2:\action{send}(j, \pred{Sig}(pvt\_key\_o, (pub\_key, \Server1))); \\

\end{array} $
		\end{tabular}\\
		
		\framebox{\noindent\textbf{Actual ${\cal A}(\User2)$:}}\\		
	\begin{tabular}{l}
		$
		\begin{array}{l}
		1: \action{send}(\Server1); ~~//\text{send nonce to \Server1}\\
		2: \_ = \action{recv}(\Server1);\\
		3: \action{send}(\User3); ~~//\text{forward nonce to \User3} \\
		4: \_ = \action{recv}(\User3);\\
		\end{array} $ 
                \end{tabular}\\

	\framebox{\noindent\textbf{Actual ${\cal A}(\User3)$:}}\\		
	\begin{tabular}{l}
		$
		\begin{array}{l}
		1:  \_ = \action{recv}(\User2);\\
		2:  \action{send}(\User2); ~~//\text{send nonce to \User2}\\
		\end{array} $ 
		\end{tabular}\\

\end{tabular}

\caption{Actuals for all threads.}
\label{fig:actuals1'}
\end{figure}

\subsection{Attack}
As an illustration, we model the ``Compromised Notaries'' violation of Section~\ref{sec:example}.  The programs executed by all threads are given in
Figure~\ref{fig:actuals1'}. \User1\ sends an access request
to \Server1\ which is intercepted by \Adversary\ who sends its own key
to \User1\ (pretending to be \Server1). \User1\ checks with the three
notaries who falsely verify \Adversary's public key to be
\Server1's key. Consequently, \User1\ sends the password to
\Adversary. {\Adversary}\ then initiates a protocol with {\Server1}
and gains access to the \User1's account. Note that the actual
programs of the three notaries attest that the public key given to
them belongs to \Server1. In parallel, \User2\ sends a request
to \Server1 and receives a response from \Server1. Following this
interaction, \User2 interacts with \User3, as in their norms.

Figure~\ref{fig:log1'} shows the expressions executed by each thread
on the property-violating trace. For instance, the label
$\langle \langle \User1, 1 \rangle, \langle \Adversary,
1 \rangle \rangle $ indicates that both \User1 and \Adversary\
executed the expressions with the line number 1 in their actual
programs, which resulted in a synchronous communication between them,
while the label $\langle \Adversary, 4 \rangle $ indicates the local
execution of the expression at line~4 of \Adversary's program. The
initial configuration has the programs: $\{{\cal A}(\User1), {\cal
A}(\Server1), {\cal A}({\Adversary}), {\cal A}({\Notary1}), \\ {\cal
A}({\Notary2}), {\cal A}({\Notary3}), {\cal A}(\User2), {\cal
A}(\User3) \}$. For this attack scenario, the concrete trace $t$ we
consider is such that $\log(t)$ is any
\emph{arbitrary interleaving} of the actions for $X= \{\Adversary, \User1, \User2, \User3, 
\Server1, \Notary1, \\ \Notary2, \Notary3\}$ shown in Figure~\ref{fig:log1'}(a). Any such interleaved log is
denoted $\log(t)$ in the sequel. At the end of this log, $(acct, \Adversary)$ occurs in the access
control matrix $P$, but {\Adversary} does not own $acct$. Hence, this
log corresponds to a violation of our security property.

\begin{figure*}
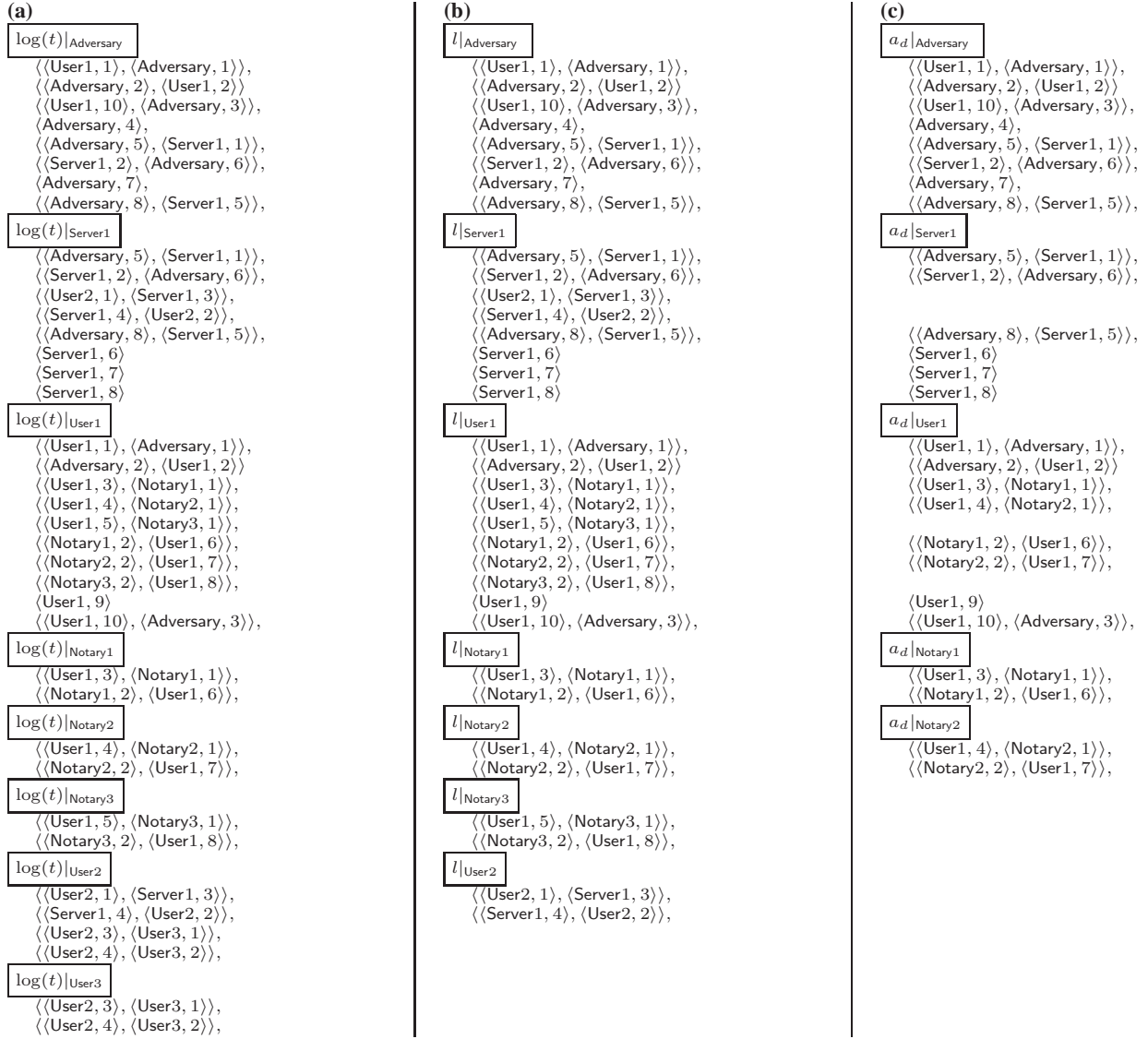


\scriptsize
\begin{tabular}{p{0.33\textwidth}|p{0.33\textwidth}|p{0.34\textwidth}}

\begin{tabular}[t]{l}

\small

	{\noindent\textbf{(a)}}\\
	\framebox{\noindent\textbf{$\log(t)|_{\Adversary}$ }}\\

		\begin{tabular}{l}
		$
		\begin{array}{l}
		
		\langle \langle \User1, 1 \rangle, \langle \Adversary, 1 \rangle \rangle, \\	
			\langle \langle \Adversary, 2  \rangle, \langle \User1, 2 \rangle \rangle \\
			\langle \langle \User1, 10 \rangle, \langle \Adversary, 3 \rangle \rangle,\\
			\langle  \Adversary, 4 \rangle,\\
			\langle \langle \Adversary, 5 \rangle,\langle \Server1, 1 \rangle \rangle,\\
			\langle \langle \Server1, 2 \rangle, \langle \Adversary, 6 \rangle \rangle, \\
			\langle  \Adversary, 7 \rangle,\\
			      		\langle \langle \Adversary, 8  \rangle,  \langle \Server1, 5 \rangle \rangle, \\

		\end{array} $ \\
		\end{tabular}\\

			\framebox{\noindent\textbf{$\log(t)|_{\Server1}$}}\\
		\begin{tabular}{l}
		$
		\begin{array}{l}

		  \langle \langle \Adversary, 5 \rangle,\langle \Server1, 1 \rangle \rangle,\\
				\langle \langle \Server1, 2 \rangle, \langle \Adversary, 6 \rangle \rangle, \\
					\langle \langle \User2, 1 \rangle,  \langle \Server1, 3 \rangle \rangle, \\
					\langle \langle \Server1, 4 \rangle, \langle \User2, 2 \rangle \rangle, \\
				\langle \langle \Adversary, 8 \rangle,  \langle \Server1, 5 \rangle \rangle, \\
               \langle \Server1, 6 \rangle\\ 		  
		       \langle \Server1, 7 \rangle\\ 	
			       \langle \Server1, 8 \rangle\\ 
		 				
		\end{array} $\\
		\end{tabular}\\

		\framebox{\noindent\textbf{$\log(t)|_{\User1}$}}\\
		\begin{tabular}{l}
		$
		\begin{array}{l}
	
		\langle \langle \User1, 1 \rangle, \langle \Adversary, 1 \rangle \rangle, \\
		\langle \langle \Adversary, 2 \rangle, \langle \User1, 2 \rangle \rangle \\
		\langle \langle \User1, 3 \rangle,  \langle \Notary1, 1 \rangle \rangle, \\
		\langle \langle \User1, 4 \rangle,  \langle \Notary2, 1 \rangle \rangle,\\
		\langle \langle \User1, 5 \rangle,  \langle \Notary3, 1 \rangle \rangle,\\

		   \langle \langle \Notary1, 2 \rangle, \langle \User1, 6 \rangle \rangle,  \\
                  \langle \langle \Notary2, 2 \rangle, \langle \User1, 7 \rangle \rangle,  \\
                          \langle \langle \Notary3, 2 \rangle, \langle \User1, 8 \rangle \rangle,  \\
	 		\langle \User1, 9 \rangle \\
			\langle \langle \User1, 10 \rangle, \langle \Adversary, 3 \rangle \rangle,\\
		
			\end{array} $ \\
			\end{tabular}\\

		\framebox{\noindent\textbf{$\log(t)|_{\Notary1}$}}\\

		\begin{tabular}{l}
		$
		\begin{array}{l}
		\langle \langle \User1, 3 \rangle,  \langle \Notary1, 1 \rangle \rangle, \\
                  \langle \langle \Notary1, 2 \rangle, \langle \User1, 6 \rangle \rangle, \\
		\end{array} $ 
		\end{tabular}\\

		\framebox{\noindent\textbf{$\log(t)|_{\Notary2}$}}\\

		\begin{tabular}{l}
		$
		\begin{array}{l}
		\langle \langle \User1, 4 \rangle,  \langle \Notary2, 1 \rangle \rangle, \\
                	\langle \langle \Notary2, 2 \rangle, \langle \User1, 7 \rangle \rangle,  \\
		\end{array} $ 
		\end{tabular}\\

               \framebox{\noindent\textbf{$\log(t)|_{\Notary3}$}}\\

		\begin{tabular}{l}
		$
		\begin{array}{l}
		\langle \langle \User1, 5 \rangle,  \langle \Notary3, 1 \rangle \rangle, \\
                	\langle \langle \Notary3, 2 \rangle, \langle \User1, 8 \rangle \rangle,  \\
		\end{array} $ 
		\end{tabular}\\
   
			\framebox{\noindent\textbf{$\log(t)|_{\User2}$}}\\
			\begin{tabular}{l}
		$
		\begin{array}{l}
	\langle \langle \User2, 1 \rangle,  \langle \Server1, 3 \rangle \rangle, \\
					\langle \langle \Server1, 4 \rangle, \langle \User2, 2 \rangle \rangle, \\
\langle \langle \User2, 3 \rangle,  \langle \User3, 1 \rangle \rangle, \\
\langle \langle \User2, 4 \rangle,  \langle \User3, 2 \rangle \rangle, \\

\end{array} $ 
		\end{tabular}\\
   
		\framebox{\noindent\textbf{$\log(t)|_{\User3}$}}\\
\begin{tabular}{l}
		$
		\begin{array}{l}
		\langle \langle \User2, 3 \rangle,  \langle \User3, 1 \rangle \rangle, \\
\langle \langle \User2, 4 \rangle,  \langle \User3, 2 \rangle \rangle, \\

\end{array} $ 
		\end{tabular}\\

			\end{tabular}
			&
			\begin{tabular}[t]{l}

\small

{\noindent\textbf{(b)}}\\
	\framebox{\noindent\textbf{$l|_{\Adversary}$ }}\\

		\begin{tabular}{l}
		$
		\begin{array}{l}
		
		\langle \langle \User1, 1 \rangle, \langle \Adversary, 1 \rangle \rangle, \\	
			\langle \langle \Adversary, 2  \rangle, \langle \User1, 2 \rangle \rangle \\
			\langle \langle \User1, 10 \rangle, \langle \Adversary, 3 \rangle \rangle,\\
			\langle  \Adversary, 4 \rangle,\\
			\langle \langle \Adversary, 5 \rangle,\langle \Server1, 1 \rangle \rangle,\\
			\langle \langle \Server1, 2 \rangle, \langle \Adversary, 6 \rangle \rangle, \\
			\langle  \Adversary, 7 \rangle,\\
			      		\langle \langle \Adversary, 8  \rangle,  \langle \Server1, 5 \rangle \rangle, \\

		\end{array} $ \\
		\end{tabular}\\

			\framebox{\noindent\textbf{$l|_{\Server1}$}}\\
		\begin{tabular}{l}
		$
		\begin{array}{l}

		  \langle \langle \Adversary, 5 \rangle,\langle \Server1, 1 \rangle \rangle,\\
				\langle \langle \Server1, 2 \rangle, \langle \Adversary, 6 \rangle \rangle, \\
					\langle \langle \User2, 1 \rangle,  \langle \Server1, 3 \rangle \rangle, \\
					\langle \langle \Server1, 4 \rangle, \langle \User2, 2 \rangle \rangle, \\
				\langle \langle \Adversary, 8 \rangle,  \langle \Server1, 5 \rangle \rangle, \\
               \langle \Server1, 6 \rangle\\ 		  
		       \langle \Server1, 7 \rangle\\ 	
			       \langle \Server1, 8 \rangle\\ 
		 				
		\end{array} $\\
		\end{tabular}\\

		\framebox{\noindent\textbf{$l|_{\User1}$}}\\
		\begin{tabular}{l}
		$
		\begin{array}{l}
	
		\langle \langle \User1, 1 \rangle, \langle \Adversary, 1 \rangle \rangle, \\
		\langle \langle \Adversary, 2 \rangle, \langle \User1, 2 \rangle \rangle \\
		\langle \langle \User1, 3 \rangle,  \langle \Notary1, 1 \rangle \rangle, \\
		\langle \langle \User1, 4 \rangle,  \langle \Notary2, 1 \rangle \rangle,\\
		\langle \langle \User1, 5 \rangle,  \langle \Notary3, 1 \rangle \rangle,\\

		   \langle \langle \Notary1, 2 \rangle, \langle \User1, 6 \rangle \rangle,  \\
                  \langle \langle \Notary2, 2 \rangle, \langle \User1, 7 \rangle \rangle,  \\
                          \langle \langle \Notary3, 2 \rangle, \langle \User1, 8 \rangle \rangle,  \\
	 		\langle \User1, 9 \rangle \\
			\langle \langle \User1, 10 \rangle, \langle \Adversary, 3 \rangle \rangle,\\
		
			\end{array} $ \\
			\end{tabular}\\

		\framebox{\noindent\textbf{$l|_{\Notary1}$}}\\

		\begin{tabular}{l}
		$
		\begin{array}{l}
		\langle \langle \User1, 3 \rangle,  \langle \Notary1, 1 \rangle \rangle, \\
                  \langle \langle \Notary1, 2 \rangle, \langle \User1, 6 \rangle \rangle, \\
		\end{array} $ 
		\end{tabular}\\

		\framebox{\noindent\textbf{$l|_{\Notary2}$}}\\

		\begin{tabular}{l}
		$
		\begin{array}{l}
		\langle \langle \User1, 4 \rangle,  \langle \Notary2, 1 \rangle \rangle, \\
                	\langle \langle \Notary2, 2 \rangle, \langle \User1, 7 \rangle \rangle,  \\
		\end{array} $ 
		\end{tabular}\\

               \framebox{\noindent\textbf{$l|_{\Notary3}$}}\\

		\begin{tabular}{l}
		$
		\begin{array}{l}
		\langle \langle \User1, 5 \rangle,  \langle \Notary3, 1 \rangle \rangle, \\
                	\langle \langle \Notary3, 2 \rangle, \langle \User1, 8 \rangle \rangle,  \\
		\end{array} $ 
		\end{tabular}\\
   
			\framebox{\noindent\textbf{$l|_{\User2}$}}\\
			\begin{tabular}{l}
		$
		\begin{array}{l}
	\langle \langle \User2, 1 \rangle,  \langle \Server1, 3 \rangle \rangle, \\
					\langle \langle \Server1, 4 \rangle, \langle \User2, 2 \rangle \rangle, \\

\end{array} $ 
		\end{tabular}\\

			\end{tabular}
			
			&
			
			\begin{tabular}[t]{l}
				
\small
{\noindent\textbf{(c)}}\\
		\framebox{\noindent\textbf{$a_d|_{\Adversary}$ }}\\

		\begin{tabular}{l}
		$
		\begin{array}{l}
		
		\langle \langle \User1, 1 \rangle, \langle \Adversary, 1 \rangle \rangle, \\	
			\langle \langle \Adversary, 2  \rangle, \langle \User1, 2 \rangle \rangle \\
			\langle \langle \User1, 10 \rangle, \langle \Adversary, 3 \rangle \rangle,\\
			\langle  \Adversary, 4 \rangle,\\
			\langle \langle \Adversary, 5 \rangle,\langle \Server1, 1 \rangle \rangle,\\
			\langle \langle \Server1, 2 \rangle, \langle \Adversary, 6 \rangle \rangle, \\
			\langle  \Adversary, 7 \rangle,\\
			      		\langle \langle \Adversary, 8  \rangle,  \langle \Server1, 5 \rangle \rangle, \\

		\end{array} $ \\
		\end{tabular}\\

			\framebox{\noindent\textbf{$a_d|_{\Server1}$}}\\
		\begin{tabular}{l}
		$
		\begin{array}{l}

		  \langle \langle \Adversary, 5 \rangle,\langle \Server1, 1 \rangle \rangle,\\
				\langle \langle \Server1, 2 \rangle, \langle \Adversary, 6 \rangle \rangle, \\
	\\
	\\
				\langle \langle \Adversary, 8 \rangle,  \langle \Server1, 5 \rangle \rangle, \\
               \langle \Server1, 6 \rangle\\ 		  
		       \langle \Server1, 7 \rangle\\ 	
			       \langle \Server1, 8 \rangle\\ 
		 				
		\end{array} $\\
		\end{tabular}\\

		\framebox{\noindent\textbf{$a_d|_{\User1}$}}\\
		\begin{tabular}{l}
		$
		\begin{array}{l}
	
		\langle \langle \User1, 1 \rangle, \langle \Adversary, 1 \rangle \rangle, \\
		\langle \langle \Adversary, 2 \rangle, \langle \User1, 2 \rangle \rangle \\
		\langle \langle \User1, 3 \rangle,  \langle \Notary1, 1 \rangle \rangle, \\
		\langle \langle \User1, 4 \rangle,  \langle \Notary2, 1 \rangle \rangle,\\
	\\
		   \langle \langle \Notary1, 2 \rangle, \langle \User1, 6 \rangle \rangle,  \\
                  \langle \langle \Notary2, 2 \rangle, \langle \User1, 7 \rangle \rangle,  \\
                       \\
	 		\langle \User1, 9 \rangle \\
			\langle \langle \User1, 10 \rangle, \langle \Adversary, 3 \rangle \rangle,\\
		
			\end{array} $ \\
			\end{tabular}\\

		\framebox{\noindent\textbf{$a_d|_{\Notary1}$}}\\

		\begin{tabular}{l}
		$
		\begin{array}{l}
		\langle \langle \User1, 3 \rangle,  \langle \Notary1, 1 \rangle \rangle, \\
                  \langle \langle \Notary1, 2 \rangle, \langle \User1, 6 \rangle \rangle, \\
		\end{array} $ 
		\end{tabular}\\

		\framebox{\noindent\textbf{$a_d|_{\Notary2}$}}\\

		\begin{tabular}{l}
		$
		\begin{array}{l}
		\langle \langle \User1, 4 \rangle,  \langle \Notary2, 1 \rangle \rangle, \\
                	\langle \langle \Notary2, 2 \rangle, \langle \User1, 7 \rangle \rangle,  \\
		\end{array} $ 
		\end{tabular}\\

			\end{tabular}

\end{tabular}
\caption{\emph{Left to Right:} \textbf{(a):} $\log(t)|_i$ for $i \in I$. \textbf{(b):} Lamport cause $l$ for Theorem~\ref{theorem-passwords1}. $l|_i= \emptyset$ for $i \in \{\User3\}$ as output by Definition~\ref{definition:cause1'}. \textbf{(c):} Actual cause $a_d$ for Theorem~\ref{theorem-passwords1}. $a_d|_i= \emptyset$ for $i \in \{\Notary3, \User2, \User3\}$. $a_d$ is a projected \emph{sublog}
of Lamport cause $l$.}
\label{fig:log1'}
\end{figure*}

Note that if any two of the three notaries had attested
the \Adversary's key to belong to \Server1, the violation would still
have happened. Consequently, we may expect three independent program
causes in this
example: \{\Adversary, \User1, \Server1, \Notary1, \Notary2\} with the
action causes $a_d$ as shown in Figure~\ref{fig:log1'}(c),
\{\Adversary, \User1, \Server1, \Notary1, \Notary3\} with the actions $a_d'$, and
\{\Adversary, \User1, \Server1, \Notary2, \Notary3\} with the actions $a_d''$ where $a_d'$ and $a_d''$ can be obtained from $a_d$ (Figure~\ref{fig:log1'}) by considering actions for \{\Notary1, \Notary3\} and \{\Notary2, \Notary3\} respectively, instead of actions for \{\Notary1, \Notary2\}. Our treatment of independent causes follows the tradition in the causality literature. The following
theorem states that our definitions determine exactly these three
independent causes -- one notary is dropped from each of these sets, but no notary is discharged from all the sets. This determination reflects the intuition that only two dishonest notaries are sufficient to cause the violation. Additionally, while it is true that all parties who follow the protocol should not be \emph{blamed} for a violation, an honest party may be an \emph{actual cause} of the violation (in both the common and the philosophical sense of the word), as demonstrated in this case study. This two-tiered view of accountability of an action by separately asserting cause and blame can also be found in prior work in law and philosophy~\cite{nissenbaum1996,feinberg1985suaculpa}. Determining actual cause is nontrivial and is the focus of this work.

\begin{theorem}\label{theorem-passwords1}
Let $I
= \{\User1, \Server1, \Adversary, \Notary1,\\ \Notary2, \Notary3, \User2, \User3\}$
and $\Sigma$ and ${\cal A}$ be as described above. Let $t$ be a trace
from $\langle I, {\cal A}, \Sigma\rangle$ such that $log(t)|_i$ for
each $i \in I$ matches the corresponding log projection from
Figure~\ref{fig:log1'}(a). Then, Definition~\ref{definition:cause2b}
determines three possible values for the program cause $X$ of
violation
$t \in \varphi_V$: \{\Adversary, \User1, \Server1, \Notary1, \Notary2\},
\{\Adversary, \User1, \Server1, \Notary1, \Notary3\}, and
\{\Adversary, \User1, \Server1, \Notary2, \Notary3\} where the 
corresponding actual causes are $a_d, a_d'$ and $a_d''$, respectively.
\end{theorem}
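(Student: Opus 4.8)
The plan is to verify, for the specific configuration and log of Figure~\ref{fig:log1'}(a), that the three claimed program causes arise exactly from applying Definitions~\ref{definition:cause1'}, \ref{definition:cause2a}, and \ref{definition:cause2b} in sequence. First I would establish the Lamport cause of Phase~1. The candidate prefix is $l$ as drawn in Figure~\ref{fig:log1'}(b): it drops the trailing \User2--\User3 synchronizations (lines 3,4 of \User2's program and all of \User3's program) since those occur, under happens-before, strictly after \Server1 inserts $(acct,\Adversary)$ into $P$. For \textbf{sufficiency} I would argue that in any trace $t'$ with $l \leq_p log(t')$, each thread has executed at least the prefix of its program recorded in $l|_i$, and in particular \Server1 must have executed through line~8 ($\action{insert}(P,(acct,\Adversary))$) because \Server1's projection $l|_{\Server1}$ ends with $\langle\Server1,8\rangle$; moreover the synchronization labels in $l$ force the actual message flow (\Adversary's key to \User1, the notaries' \Server1-attestations, \User1's encrypted password decrypted by \Adversary, \Adversary's re-encryption to \Server1) so that the $\action{assert}$ at \Server1 line~7 passes with $J = \Adversary$. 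Since $(acct,\Adversary)\in P$ and $\Adversary \neq \User1$, $t'\in\varphi_V$ by the formula \eqref{eqn:property}, invoking ${\sf RC}(\varphi_V)$ to handle reorderings. For \textbf{minimality} I would check that removing the last label of any thread's projection breaks sufficiency: e.g. dropping $\langle\Server1,8\rangle$ admits a trace where $P$ is never updated; dropping $\langle\langle\Adversary,8\rangle,\langle\Server1,5\rangle\rangle$ leaves \Server1 stuck at its $\action{recv}$; dropping any notary synchronization leaves \User1 stuck waiting to receive. This shows $l$ is \emph{the} Lamport cause, and it is unique because Phase~1 takes projected prefixes and every label in $l$ is on the happens-before path to the violation.

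Next I would run Phase~2 on this $l$. The candidate sublog is $a_d$ of Figure~\ref{fig:log1'}(c): relative to $l$ it additionally drops \Notary3's two labels, \User2's two (remaining) labels, and \Server1's lines 3 and 4 (the synchronization with \User2). For \textbf{sufficiency'} I would pick an arbitrary $f$, form ${\sf dummify}(I,{\cal A},\Sigma,a_d,f)$ — which replaces the dropped receives/sends of \Notary3 and \User2 and \Server1's lines~3--4 by no-ops returning $f(\cdot)$-terms or $0$ for sends — and argue that any trace $t'$ from this configuration with $a_d \sqsubseteq_p log(t')$ still reaches $(acct,\Adversary)\in P$: the key point is that \User1's $\action{assert}$ at line~8 requires ``at least two of $l1,l2,l3$ equal \Server1'', and since the synchronizations with \Notary1 and \Notary2 are retained in $a_d|_{\User1}$, \Notary1 and \Notary2 still genuinely attest \Server1, so the majority check passes regardless of what the now-dummified \Notary3 ``returns'' (its value $l3$ is bound but irrelevant to the two-out-of-three test). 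Similarly \Server1's lines~3--4 being no-ops does not disturb the password check at line~7, and \User1 no longer blocks on \Notary3 because $a_d|_{\User1}$ omits $\langle\langle\User1,5\rangle,\ldots\rangle$ and $\langle\ldots,\langle\User1,8\rangle\rangle$, so after dummification \User1's program skips those receives. Hence $t'\in\varphi_V$, again closing under ${\sf RC}$.

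For \textbf{minimality'} I would show no proper sublog of $a_d$ suffices: removing any label of \Adversary, \User1, or \Server1 that remains in $a_d$ either blocks a thread on a $\action{recv}$ or removes the $\action{insert}$ or the decrypt/re-encrypt steps, so the violation can fail to appear; removing one of \Notary1's or \Notary2's labels drops the count of genuine \Server1-attestations to one, so under a suitable $f$ (assigning \Notary1's or \Notary2's dummified return a non-\Server1 value) \User1's $\action{assert}$ at line~8 fails and \User1 gets stuck, killing the violation — this is precisely why \Notary1 and \Notary2 are actual causes in $a_d$ but \Notary3 is only a progress enabler. By the symmetry of the three notaries (their programs and their roles in \User1's majority check are interchangeable, and each appears symmetrically in $l$), the same argument with \Notary3 retained and \Notary1 (resp.\ \Notary2) dummified yields the minimal sublogs $a_d'$ and $a_d''$; and no sublog dropping \emph{two} notaries can be sufficient, since then only one genuine attestation survives and the majority check can be made to fail. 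Finally, lifting via Definition~\ref{definition:cause2b}, the thread identifiers appearing in $a_d, a_d', a_d''$ are exactly $\{\Adversary,\User1,\Server1,\Notary1,\Notary2\}$, $\{\Adversary,\User1,\Server1,\Notary1,\Notary3\}$, $\{\Adversary,\User1,\Server1,\Notary2,\Notary3\}$.

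The main obstacle I expect is the minimality'' argument for the notaries: one must exhibit, for each notary pair and each proposed smaller sublog, a concrete function $f$ under which the counterfactual violation genuinely disappears (via \User1's $\action{assert}$ blocking), and simultaneously confirm that with $f$ ranging over \emph{all} functions the retained sublog $a_d$ still always forces the violation. Getting the quantifier alternation right — sufficiency' is ``for all $f$'', minimality' needs ``there exists a sublog-plus-$f$ that fails'' — and making sure the dummification of receive actions correctly desynchronizes \User1 from \Notary3 (so \User1 does not block) is the delicate bookkeeping; the rest is a fairly mechanical trace-chasing argument riding on the happens-before structure already set up in the log projections.
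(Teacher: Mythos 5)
Your Phase~1 analysis and your Phase~2 sufficiency$'$ analysis follow the paper's own proof closely (same $l$, same $a_d$, dummification of \Notary3 and \User2 leaving the majority check [B] and the password check [A] intact, symmetry giving the three causes), and your reading of the quantifier alternation between sufficiency$'$ and minimality$'$ is right. The genuine gap is in your minimality$'$ argument for the non-notary labels. You claim that dropping a retained label of \Adversary, \User1 or \Server1 ``blocks a thread on a $\action{recv}$,'' but that mechanism cannot arise: when a label is omitted from the candidate sublog, ${\sf dummify}$ (Definition~\ref{def:dummify}) replaces the corresponding $\action{recv}$ (and its partner $\action{send}$) by a no-op whose return is $f(i,b)$, so no thread ever waits on a removed receive --- avoiding exactly that blocking is the whole point of dummification. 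The argument the paper actually makes is a data/control-dependence one: if, say, $\langle\langle\Adversary,8\rangle,\langle\Server1,5\rangle\rangle$ or the decrypt/re-encrypt steps are dropped, then for a suitable $f$ the dummy term carries the wrong password, \Server1's hash assertion [A] fails, and the violation disappears on some trace containing the smaller sublog; if $\langle\Server1,8\rangle$ is dropped, the $\action{insert}$ itself is erased from the program. In other words, the assert-based pattern you correctly use for the notaries (a suitable $f$ makes assertion [B] fail when only one genuine attestation survives) is the pattern needed for \emph{all} the non-trivial labels; blocking on receives never is.

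A second, related point: in the simplified programs of Figure~\ref{fig:actuals1'} some labels retained in $a_d$ have no data or control dependence on the $\action{insert}$ at all --- e.g.\ \Server1's line~1 receive and the synchronization $\langle\langle\User1,1\rangle,\langle\Adversary,1\rangle\rangle$ --- so dropping them still satisfies sufficiency$'$ for every $f$, and strict minimality$'$ of the stated $a_d$ fails in the simplified model. The paper acknowledges this in the parenthetical remark at the end of Section~\ref{sec:application}: minimality$'$ really holds for the full protocol in the appendix, where nonces create the missing dependencies. Your proof asserts necessity of every retained \Adversary/\User1/\Server1 label without noticing this, so as written it proves something slightly false for the main-body programs; you would need either to invoke the appendix protocol (as the paper's detailed proof does) or to weaken the claim accordingly.
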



It is instructive to understand the proof of this theorem, as it
illustrates our definitions of causation.  We verify that our Phase~1,
Phase~2 definitions (Definitions~\ref{definition:cause1'},
~\ref{definition:cause2a}, ~\ref{definition:cause2b}) yield exactly
the three values for $X$ mentioned in the theorem.

\paragraph{Lamport cause (Phase 1)}
We show that any $l$ whose projections match those shown in
Figure~\ref{fig:log1'}(b) satisfies sufficiency and minimality. From
Figure~\ref{fig:log1'}(b), such an $l$ has no actions for \User3 and
only those actions of \User2 that are involved in synchronization
with \Server1. For all other threads, the log contains every action
from $t$. The intuitive explanation for this $l$ is straightforward:
Since $l$ must be a (projected) \emph{prefix} of the trace, and the
violation only happens because of $\action{insert}$ in the last
statement of \Server1's program, every action of every program before
that statement in Lamport's happens-before relation must be in
$l$. This is exactly the $l$ described in Figure~\ref{fig:log1'}(b).

Formally, following the statement of sufficiency, let $T$ be the set
of traces starting from ${\cal C}_0 = \langle {I}, {\cal
A}, \Sigma \rangle$ (Figure~\ref{fig:actuals1'}) whose logs contain
$l$ as a projected prefix. Pick any $t' \in T$. We need to show
$t' \in \varphi_V$. However, note that any $t'$ containing all actions
in $l$ must also add $(acct, \Adversary)$ to $P$, but
$\Adversary \not= \User1$. Hence, $t' \in \varphi_V$. Further, $l$ is
minimal as described in the previous paragraph.

\paragraph{Actual cause (Phase 2)}
Phase 2 (Definitions~\ref{definition:cause2a},
~\ref{definition:cause2b}) determines three independent program causes
for
$X$: \{\Adversary, \User1, \Server1, \Notary1, \Notary2\}, \{\Adversary, \User1, \Server1, \Notary1, \Notary3\},
and \{\Adversary, \User1, \Server1, \Notary2, \Notary3\} with the
actual action causes given by $a_d, a_d'$ and $a_d''$, respectively in
Figure~\ref{fig:log1'}. These are symmetric, so we only explain why
$a_d$ satisfies Definition~\ref{definition:cause2a}. (For this $a_d$,
Definition~\ref{definition:cause2b} immediately forces $X
= \{\Adversary, \User1, \Server1, \Notary1, \Notary2\}$.) We show that
(a) $a_d$ satisfies sufficiency', and (b) No proper sublog of $a_d$
satisfies sufficiency' (minimality'). Note that $a_d$ is obtained from
$l$ by dropping \Notary3, \User2 and \User3, and all their
interactions with other threads.

We start with (a). Let $a_d$ be such that $a_{d}|_i$ matches
Figure~\ref{fig:log1'}(c) for every $i$. Fix any dummifying function
$f$. We must show that any trace originating from ${\sf dummify}(I,
{\cal A}, \Sigma,a_d,f)$, whose log contains $a_d$ as a projected
sublog, is in $\varphi_V$. Additionally we must show that there is
such a trace. There are two potential issues in mimicking the
execution in $a_d$ starting from ${\sf dummify}(I, {\cal
A}, \Sigma,a_d,f)$ --- first, with the interaction between \User1
and \Notary3 and, second, with the interaction between \Server1
and \User2. For the first interaction, on line~5, ${\cal A}(\User1)$
(Figure~\ref{fig:actuals1'}) synchronizes with {\Notary3} according to
$l$, but the synchronization label does not exist in $a_d$. However,
in ${\sf dummify}(I, {\cal A}, \Sigma,a_d,f)$, the $\action{recv}()$
on line~8 in ${\cal A}(\User1)$ is replaced with a dummy value, so the
execution from ${\sf dummify}(I, {\cal A}, \Sigma,a_d,f)$
progresses. Subsequently, the majority check (assertion [B]) succeeds
as in $l$, because two of the three notaries ({\Notary1} and
{\Notary2}) still attest the {\Adversary}'s key. A similar observation can be made about the interaction between \Server1 and \User2.

Next we prove that every trace starting from ${\sf dummify}(I, {\cal
A}, \Sigma,a_d,f)$, whose log contains $a_d$
(Figure~\ref{fig:log1'}) as a projected sublog, is in
$\varphi_V$. Fix a trace $t'$ with log $l'$. Assume $l'$ contains
$a_d$. We show $t' \in \varphi_V$ as follows:
\begin{enumerate}
\item 
Since the synchronization labels in $l'$ are a superset of those in
$a_d$, {\Server1} must execute line~8 of its program ${\cal
A}(\Server1)$ in $t'$. After this line, the access control matrix $P$
contains $(acct, J)$ for some $J$.
\item
When ${\cal A}(\Server1)$ writes $(x, J)$ to $P$ at line~8, then $J$
is the third component of a tuple obtained by decrypting a message
received on line~5.
\item
Since the synchronization projections on $l'$ are a superset of $a_d$,
and on $a_d$ $\langle \Server1, 5\rangle$ synchronizes with
$\langle \Adversary, 8\rangle$, $J$ must be the third component of an
encrypted message sent on line~8 of ${\cal A}(\Adversary)$.
\item 
The third component of the message sent on line~8 by {\Adversary} is
exactly the term ``\Adversary''. (This is easy to see, as the term
``\Adversary'' is hardcoded on line 7.) Hence, $J = \Adversary$.
\item 
This immediately implies that $t' \in \varphi_V$ since
$(acct, \Adversary) \in P$, but ${\Adversary} \not= {\User1}$.
\end{enumerate}

Last, we prove (b) --- that no proper subsequence of $a_d$ satisfies
sufficiency'. Note that $a_d$ (Figure~\ref{fig:log1'}(c)) contains
exactly those actions from $l$ (Figure~\ref{fig:log1'}) on whose
returned values the last statement of \Server1's program
(Figure~\ref{fig:actuals1'}) is data or control
dependent. Consequently, all of $a_d$ as shown is necessary to obtain
the violation. 

(The astute reader may note that in Figure~\ref{fig:actuals1'}, there
is no dependency between line~1 of \Server1's program and
the \texttt{insert} statement in \Server1. Hence, line~1 should not be
in $a_d$. While this is accurate, the program in
Figure~\ref{fig:actuals1'} is a slight simplification of the real
protocol, which is shown in the appendix. In the real protocol, line~1
returns a received nonce, whose value does influence whether or not
execution proceeds to the \texttt{insert} statement.)

\section{Towards Accountability}\label{sec:domains}
In this section,  we discuss the use of our causal analysis techniques for providing explanations and assigning blame.

\subsection{Using Causality for Explanations}
Generating explanations involves enhancing the epistemic state of an
agent by providing information about the cause of an
outcome~\cite{halpern2005explanations}. Automating this process is
useful for several tasks such as planning in AI-related applications
and has also been of interest in the philosophy
community~\cite{halpern2005explanations,
woodward2003making}. Causation has also been applied for explaining
counter examples and providing explanations for errors in model
checking~\cite{explaining-counterexamples2009,halpern-specification2008,groce2006error,rajamani2003}
where the abstract nature of the explanation provides insight about
the model.
 
In prior work, Halpern and Pearl have defined explanation in terms of
causality~\cite{halpern2005explanations}. A fact, say $E$,
constitutes an explanation for a previously established fact $F$ in a
given context, if had $E$ been true then it would have been a
sufficient cause of the established fact $F$ . Moreover, having this
information advances the prior epistemic state of the agent seeking
the explanation, i.e. there exists a world (or a setting of the
variables in Halpern and Pearl's model) where $F$ is not true but $E$
is.

Our definition of cause (Section~\ref{sec:definitions}) could be
used to explain violations arising from execution of programs in a
given initial configuration. Given a log $l$, an initial configuration
${\cal C}_0$, and a violation $\varphi_V$, our definition would
pinpoint a sequence of program actions, $a_d$, as an actual
cause of the violation on the log. $a_d$ would also be an
explanation for the violation on $l$ if having this causal information
advances the epistemic knowledge of the agent. Note that there could
be traces arising from the initial configuration where the behavior
is inconsistent with the log. Knowing that $a_d$ is
consistent with the behavior on the log and that it is a cause of the
violation would advance the agent's knowledge and provide an explanation for the violation.

\subsection{Using Causality for Blame Attribution}
Actual causation is an integral part of the prominent theories of
blame in social psychology and legal
settings~\cite{shaver2012attribution, alicke2000culpable,
feinberg1985suaculpa, kenner1967blaming}. Most of these theories
provide a comprehensive framework for blame which integrates
causality, intentionality and
foreseeability~\cite{shaver2012attribution,
alicke2000culpable,lagnado2008judgments}. These theories recognize
blame and cause as interrelated yet distinct concepts. Prior to
attributing blame to an actor, a causal relation must be established
between the actor's actions and the outcome. However, not all actions
which are determined as a cause are blameworthy and an agent can be
blamed for an outcome even if their actions were not a direct cause
(for instance if an agent was responsible for another agent's
actions). In our work we focus on the first aspect where we develop a
theory for actual causation and provide a building block to find
blameworthy programs from this set.

We can use the causal set output by the definitions in
Section~\ref{sec:definitions} and further narrow down the set to find
blameworthy programs.  Note that in order to use our definition as a
building block for blame assignment, we require
information about a) which of the executed programs deviate from the
protocol, and b) which of these deviations are harmless. Some harmless
deviants might be output as part of the causal set because their
interaction is critical for the violation to occur. Definition~\ref{definition:cause3'} below provides one approach to
removing such non-blameworthy programs from the causal set. In
addition we can filter the norms from the causal
set. 

For this purpose, we use the notion of protocol specified norms ${\cal
N}$ introduced in Section~\ref{sec:application}.  We impose an additional
constraint on the norms, i.e., in the extreme counterfactual world
where we execute norms only, there should be no possibility of
violation. We call this condition \emph{necessity}. Conceptually,
necessity says that the reference standard (norms) we employ to assign
blame is reasonable.

\begin{definition}[Necessity condition for norms]\label{definition:necessity}
Given $\langle I, \Sigma, {\cal N}, \varphi_V \rangle$, we say that
${\cal N}$ satisfies the necessity condition w.r.t.\ 
$\varphi_{V}$ if for any trace $t'$ starting from the initial configuration $\langle I,
{\cal N}, \Sigma \rangle$, it is the case that
$t' \not \in \varphi_V$.
\end{definition}

We can use the norms ${\cal N}$ and the program cause $X$ with its
corresponding actual cause $a_d$ from Phase 2
(Definitions~\ref{definition:cause2a}, ~\ref{definition:cause2b}), in
order to determine whether a program is a harmless deviant as follows. Definition~\ref{definition:cause3'} presents a sound (but not complete) approach for identifying harmless deviants. 

\begin{definition}[Harmless deviant] \label{definition:cause3'}
Let $X$ be a program cause of violation $V$ and $a_d$ be the
corresponding actual cause as determined by
Definitions~\ref{definition:cause2a} and~\ref{definition:cause2b}.  We
say that the program corresponding to index $i \in X$ is a harmless
deviant w.r.t.\ trace $t$ and violation $\varphi_V$ if ${\cal A}(i)$
is deviant (i.e. ${\cal A}(i) \neq {\cal N}(i)$) and $a_d|_{i}$ is a
prefix of ${\cal N}(i)$.  \end{definition}

For instance in our case study (Section~\ref{sec:application}),
Theorem~\ref{theorem-passwords1} outputs $X$ and $a_d$
(Figure~\ref{fig:log1'}) as a cause. $X$
includes \Server1. Considering \Server1's norm
(Figure~\ref{fig:norms1'}), ${\cal A}\{\Server\}$ will be considered a
deviant, but according to
Definition~\ref{definition:cause3'}, \Server1\ will be classified as
a \emph{harmless deviant} because $a_d|_{\Server1}$ is a prefix of
${\cal N}(\Server1)$.
Note that in order to capture blame attribution accurately, we will
need a richer model which incorporates intentionality, epistemic
knowledge and foreseeability, beyond causality.


\section{Related Work} \label{sec:related} 

Currently, there are multiple proposals for providing accountability
in decentralized multi-agent systems~\cite{feigenbaum2011,
feigenbaum2011deterrence, kusters2010, backesDDMT06,
jagadeesanJPR09, haeberlenKD07, barth2007, gossler2010, wang2013}. 
Although the intrinsic relationship between causation and accountability is
often acknowledged, the foundational studies of accountability do not
explicitly incorporate the notion of cause in their formal definition or treat
it as a blackbox concept without explicitly defining it. Our thesis is that accountability is not a trace property since evidence from the log alone does not provide a justifiable basis to determine accountable parties. Actual causation is not a trace property; inferring actions which are actual causes of a violating trace requires analyzing counterfactual traces (see our sufficiency conditions).   Accountability depends on actual causation and is, therefore, also not a trace property. 

On the other hand, prior work on actual causation in analytical philosophy and AI has considered counterfactual based causation in detail~\cite{halpernpearl2001, halpernpearl2005, mackie1965, wright1985, hallbook, pearlbook2000}. These ideas have been applied for fault diagnosis where system components are analyzed, but these frameworks do not adequately capture all the elements crucial to model a security setting. Executions in security settings involve interactions among concurrently running programs in the presence of adversaries, and little can be assumed about the scheduling of events. We discuss below those lines of work which are most closely related to ours.

\paragraph{\textbf{Accountability}}
K{\"u}sters et al~\cite{kusters2010} define a protocol $P$ with
associated accountability constraints that are rules of the form: if a
particular property holds over runs of the protocol instances then
particular agents may be blamed.  Further, they define a judge $J$ who
gives a verdict over a run $r$ of an instance $\pi$ of a protocol $P$,
where the verdict blames agents. 
In their work, K{\"u}sters et al assume that the accountability
constraints for each protocol are given and complete. They state that
the judge $J$ should be designed so that $J$'s verdict is fair and
complete w.r.t. these accountability constraints.  They design a judge
separately for every protocol with a specific accountability
property. 
K{\"u}sters et al.'s definition of accountability has been successfully applied to
substantial protocols such as voting, auctions, and contract
signing. 
Our work complements this line of work
in that we aim to provide
a semantic basis for arriving at such accountability constraints,
thereby providing a justification for the blame assignment suggested
by those constraints.  Our actual cause definition can be viewed as a
generic judging procedure that is defined independent of the violation
and the protocol. 
We believe that using our cause definition as the basis for accountability
constraints would also ensure the minimality of verdicts given by the judges.

Backes et al~\cite{backesDDMT06} define accountability as the ability to show
evidence when an agent deviates. The authors analyze a
contract signing protocol using protocol composition logic.
In particular, the authors consider the case
when the trusted third-party acts dishonestly and prove that
the party can be held accountable by looking at a violating trace. 
This work can be viewed as a special case of the subsequent work of K{\"u}sters
et al.~\cite{kusters2010} where the property associated with the violating trace is an example 
of an accountability constraint.

Feigenbaum et al~\cite{feigenbaum2011,feigenbaum2011deterrence} also
propose a definition of accountability that focuses on linking a
violation to punishment. They use Halpern and Pearl's
definition~\cite{halpernpearl2001, halpernpearl2005} of causality in
order to define mediated punishment, where punishment is justified by
the existence of a causal chain of events in addition to satisfaction
of some utility conditions. 
The underlying ideas of our cause definition
could be adapted to their framework to instantiate the causality
notion that is currently used as a black box in their definition of
mediated punishment. One key difference is that we focus on finding program actions that lead to the violation, which could explain why the violation happened while they focus on establishing a causal chain between
violation and punishment events.

\paragraph{\textbf{Causation for blame assignment}}
The work by Barth et al~\cite{barth2007} provides a definition of
accountability that uses the much coarser notion of Lamport causality, which is related to Phase 1 of our definition. However, we use minimality checks and filter out \emph{progress
enablers} in Phase 2 to obtain a finer determination of actual cause.
 
G{\"o}ssler et al's work~\cite{gossler2010,gossler2013} 
considers blame assignment for safety property violations 
where the violation of the global safety property implies that some components have violated their local specifications. They use a counterfactual notion of causality similar in spirit to ours to identify a subset of these faulty components as causes of the violation. \cut{The blame assignment is done by using a single execution trace (i.e. the log) and the local specification of the components. } The most recent work in this line applies the framework to real-time systems specified using timed automata~\cite{gossler2014}.

A key technical difference between this line of work and ours is the way in which the contingencies to be considered in counterfactual reasoning are constructed. We have a program-based approach to leverage reasoning methods based on invariants and program logics.
G{\"o}ssler et al assume that a dependency relation that captures information flow between component actions are given and construct their contingencies using
the traces of faulty components observed on the log as a basis. A set of faulty components is the necessary cause of the violation if the violation would disappear once the traces of these faulty components are modified to match the components' local specifications. 
They determine the longest prefixes of faulty components that satisfy the specification and replace the faulty suffixes with a correct one. Doing such a replacement without taking into account its impact on the behavior of other components that interact with the faulty components would not be satisfactory. Indeed, Wang et al~\cite{wang2013} describe a counterexample to G{\"o}ssler et
al's work~\cite{gossler2010} where all causes are not found because of
not being able to completely capture the effect of one component's
behavior on another's.  The most recent definitions of  G{\"o}ssler et
al~\cite{gossler2013,gossler2014} address this issue by over approximating the parts of the log affected by the faulty components and replacing them with behavior that would have arisen had the faulty ones behaved correctly.

In constructing the contingencies to consider in counterfactual reasoning, we do not work with individual traces as  G{\"o}ssler et al. Instead, we 
work at the level of programs where ``correcting'' behavior is done by replacing program actions 
with those that do not have any effect on the violation other than enabling the programs to progress.
The relevant contingencies follow directly from the execution of programs where such replacements have been done, without any need to develop additional machinery for reconstructing traces. Note also that we have a sufficiently fine-grained definition to pinpoint the minimal set of actions that make the component a part of the cause, where these actions may a part be of faulty or non-faulty programs. 
\cut{Our approach also allows us to investigate causal impact of actions before the point where violation has been detected, not necessarily keeping prefixes before the violation fixed as in the work of G{\"o}ssler et al.
A basic premise of our work is that in the security setting parties in protocols deviate by exercising their choice to follow the prescribed norm or a different program, and this is best captured by replacing the entire program by a norm rather than suffixes of the individual faulty traces. Following a norm instead of deviating may give rise to changes in behavior not only after the point where the violation initially occurred but also before it. Our causal analysis accounts for such scenarios. Note that this distinction would still exist if we generalized our framework to causal anaylses where programs are considered as black-boxes.} 
Moreover, we purposely separate cause determination and blame assignment because we believe that in the security setting, blame assignment is a problem that requires additional criteria to be considered such as the ability to make a choice, and intention. The work presented in this paper focuses on identifying cause as a \emph{building block} for blame assignment. 


\section{Conclusion} \label{sec:conclusion} 
We have presented a first attempt at defining what it means for a sequence
of program actions to be an actual cause of a violation of a security
property. This question is motivated by security applications where
agents can exercise their choice to either execute a prescribed
program or deviate from it. While we demonstrate the value of this
definition by analyzing a set of authentication failures, it would be
interesting to explore applications to other protocols in which
accountability concerns are central, in particular, protocols for
electronic voting and secure multiparty computation in the semi-honest
model.
Another challenge in security settings is that deviant programs
executed by malicious agents may not be available for analysis; rather
there will be evidence about certain actions committed by such
agents. A generalized treatment accounting for such partial
observability would be technically interesting and useful for
other practical applications. This work demonstrates the importance of program actions as causes as a useful \emph{building block} for several such applications, in particular for providing explanations, assigning blame and providing accountability guarantees for security protocols.



\bibliographystyle{IEEEtran}
{\bibliography{arxiv-cause-actions}}

\newpage
\onecolumn
\section*{Appendix} \label{sec:appendix} 

\subsection{Operational Semantics}

Selected rules of the operational semantics of the programming language $L$ are shown below.
\\

\begin{small}
\noindent \framebox{$T \tstep T'$}
\begin{mathpar}

\inferrule{\pred{eval}~t~t'} {I \wsep t \wsep \sigma
  \tstept{\epsilon} I \wsep t' \wsep \sigma'}\rname{red-end}\and

\inferrule{\sigma; \zeta(t) \astep \sigma'; t' \\ \pred{eval}~t'~t''}
 {I \wsep ((b: x = \zeta(t)); e) \wsep \sigma
    \tstept{\langle I, b\rangle} I \wsep e \{t''/x\} \wsep
    \sigma'}\rname{red-act}\and

\inferrule{\pred{eval}~t~\btrue} {I \wsep (\action{assert}(t); e) \wsep
  \sigma \tstept{\epsilon} I \wsep e \wsep
  \sigma}\rname{red-assert}\and

\end{mathpar}
\framebox{${\cal C} \cstep {\cal C'}$}

\begin{flushleft}
\emph{Internal reduction}
\end{flushleft}
\begin{mathpar}
\inferrule{T_i \tstept{r} T_i'}{\ldots, T_i,\ldots \cstept{r}
  \ldots,T_i',\ldots}\rname{red-config}
\end{mathpar}

\begin{flushleft}\emph{Communication action}\end{flushleft}
\begin{mathpar}
\inferrule{\pred{eval}~t~t'}{\ldots,\langle I_s\wsep ((b_s:
  x=\action{send}(t)); e_s) \wsep \sigma_s \rangle, \langle I_r\wsep
  ((b_r: y=\action{recv}()); e_r) \wsep \sigma_r \rangle, \ldots
  \\\\ \cstept{\langle\langle I_s, b_s \rangle, \langle I_r, b_r
    \rangle \rangle} \ldots, \langle I_s\wsep e_s[0/x]\wsep
  \sigma_s\rangle, \langle I_r \wsep e_r[t'/y] \wsep \sigma_r \rangle,
  \ldots}\rname{red-comm}
\end{mathpar}
\end{small}

\subsection{Case study: Compromised notaries attack}

We model an instance of our running example based on
passwords in order to demonstrate our actual cause definition. As
explained in Section~\ref{sec:example}, we consider a protocol session
where \Server1, \User1, \User2, \User3\ and multiple notaries interact over an
adversarial network to establish access over a password-protected
account. In parallel for this scenario, we assume the log also contains interactions of a second server (\Server2), one notary (\Notary4, not contacted by \User1, \User2 or \User3) 
and another user (\User4) who follow their norms for account access. These threads do not interact with threads \{\User1, \Server1, \Notary1, \Notary2, 
\Notary3, \Adversary, \User2, \User3\}. The protocol has been described in detail below.

\subsubsection{Protocol Description}
\label{subsec:protocol1}
We consider our example protocol with eleven threads
named \{\Server1, \User1, \User2, \User3, \Adversary, \Notary1, \Notary2, \Notary3, \Notary4, \Server2, \User4\}. The \emph{norms} for all these threads, except {\Adversary}
are shown in Figure~\ref{fig:norms1}. The
actual violation is caused because some of the executing programs are
different from the norms. These actual programs, called ${\cal A}$ as
in Section~\ref{sec:definitions}, are shown later. The norms are shown
here to help the reader understand what the ideal protocol is. 

In this case study, we have two servers (\Server1, \Server2) running the protocol with two different users (\User1, \User4) and each server allocates 
account access separately. The norms in Figure~\ref{fig:norms1} assume that {\User1}'s and {\User4}'s accounts
(called $acct_1$ and $acct_2$ in {\Server1}'s and {\Server2}'s norm respectively) have been created already. \User1's password, $pwd_1$ is 
associated with {\User1}'s user id $uid1$. Similarly \User4's password $pwd_{2}$ is associated with its user id $uid2$.  This association (in hashed form) is stored in {\Server1}'s
local state at pointer $mem_1$ (and at $mem_2$ for \Server2). The norm for {\Server1} is to wait for a
request from an entity, respond with its public key, then wait for a
password encrypted with that public key and grant access to the
requester if the password matches the previously stored value
in \Server1's memory at $mem_1$. To grant access, \Server1 adds an
entry into a private access matrix, called $P_1$. (A separate server
thread, not shown here, allows {\User1} to access its resource if this
entry exists in $P_1$.)

The norm for {\User1} is to send an access request to {\Server1}, wait
for the server's public key, verify that key with three notaries and
then send its password $pwd_{1}$ to {\Server1}, encrypted under \Server1's
public key. On receiving \Server1's public key, {\User1} initiates a
protocol with the three notaries and accepts or rejects the key based
on the response of a majority of the notaries.  

The norm for \User4 is the same as that for \User1 except that it interacts with \Server2. Note that \User4 only verifies the public key with one notary, \Notary4. The norm for \Server2 is the same as that for \Server1 except that it interacts with \User4. 

In
parallel, the norm for \User2\ is to generate and send a nonce to \User3.  The norm for \User3 is to receive a message from \User2,
generate a nonce and send it to \User2. 

Each notary has a private database of \textit{(public\_key,
principal)} tuples. The norms here assume that this database has
already been created correctly. When {\User1} or {\User4} send a request with a
public key, the notary responds with the principal's identifier after
retrieving the tuple corresponding to the key in its database. (Note
that, in this simple example, we identify threads with principals, so
the notaries just store an association between public keys and their
threads.) 

\begin{figure}

\small

\begin{tabular}{p{0.5\textwidth}}

				\framebox{\noindent\textbf{Norm ${\cal N}(\Server1)$:}}\\
		\begin{tabular}{l}
		$
		\begin{array}{l}

   		 1: (uid1, n1) = \action{recv}(j);  ~~// \text{access req from thread $j$}\\
			2: n2 = \action{new}; \\
		 3: \action{send}(j, (pub\_key\_\Server1, n2, n1));~~// \text{sign and send public key}\\  
		4: s1 = \action{recv}(j);  ~~// \text{encrypted $uid1,pwd1$ from $j$, alongwith its thread id $J$}\\
    5: (n3, uid1, pwd1, J) = \pred{Dec}(pvt\_key\_\Server1, s1);\\ 		  
		6: t = \pred{Hash}(uid1, pwd1); \\
		\mbox{  }~~ \action{assert} (mem_1 = t)  ~~// \text{compare hash with stored hash value for same uid}\\
		7: \action{insert}(P_1, (acct_1, J)); \\
		\end{array} $\\
		\end{tabular}
		
			\\

		\framebox{\noindent\textbf{Norm ${\cal N}(\User1)$:}}\\
		\begin{tabular}{l}
		$
		\begin{array}{l}

		1: n1 = \action{new}; \\
		2: \action{send}(\Server1, (uid1, n1)); ~~// \text{access request}\\
		3: (pub\_key1, n2, n1) = \action{recv}(j);  ~~//\text{key from $j$}\\
		4: n3, n4, n5 = \action{new}; \\
		5: \action{send}(\Notary1, pub\_key1, n3);\\
		6: \action{send}(\Notary2, pub\_key1, n4);\\
		7: \action{send}(\Notary3, pub\_key1, n5);\\
	  8: \pred{Sig}(pvt\_key\_\Notary1, (pub\_key1, l1,n3)) = \action{recv}(\Notary1);~~//\text{notary1 responds }\\
	        9: \pred{Sig}(pvt\_key\_\Notary2, (pub\_key1, l2,n4)) = \action{recv}(\Notary2);~~//\text{notary2 responds }\\
					10: \pred{Sig}(pvt\_key\_\Notary3, (pub\_key1, l3,n5)) = \action{recv}(\Notary3);~~//\text{notary3 responds }\\
		\mbox{  }~~\action{assert} (\text{At least two of \{l1,l2,l3\} equal \Server1}) \\
		11: t = \pred{Enc}(pub\_key1, n2, (uid1, pwd1, \User1 )); \\
		12: \action{send}(\Server1, t); ~~// \text{send $t$ to \Server1};\\
		
			\end{array} $ \\
			\end{tabular}\\
			
			\\
			
				\framebox{\noindent\textbf{Norms ${\cal N}(\Notary1), {\cal N}(\Notary2), {\cal N}(\Notary3), {\cal N}(\Notary4)$:}}\\

		\begin{tabular}{l}
		$
		\begin{array}{l}
		//\text{ $o$ denotes \Notary1, \Notary2, \Notary3 or \Notary4}\\
		1: (pub\_key, n1) = \action{recv}(j); \\
		2: pr = \pred{KeyOwner}(pub\_key); ~~~// \text{lookup key owner} \\
		3: \action{send}(j, \pred{Sig}(pvt\_key\_o, (pub\_key, pr, n1))); ~~// \text{signed certificate}; \\

		\end{array} $ 

		\end{tabular}\\

					\framebox{\noindent\textbf{Norm ${\cal N}(\Server2)$:}}\\
		\begin{tabular}{l}
		$
		\begin{array}{l}

   		 1: (uid2, n1) = \action{recv}(j);  ~~// \text{access req from thread $j$}\\
			2: n2 = \action{new}; \\
		 3: \action{send}(j, (pub\_key\_\Server2, n2, n1));\\ 
		4: s1 = \action{recv}(j);  ~~// \text{encrypted $uid2,pwd2$ from $j$, alongwith its thread id $J$}\\
    5: (n2, uid2, pwd2, J) = \pred{Dec}(pvt\_key\_\Server2, s1);\\ 		  
		6: t = \pred{Hash}(uid2, pwd2); \\
		\mbox{  }~~ \action{assert} (mem_2 = t)  ~~// \text{compare hash with stored hash value for same uid}\\
		7: \action{insert}(P_2, (acct_2, J)); \\
   		 		\end{array} $
		\end{tabular}
		
			\\

	\framebox{\noindent\textbf{Norm ${\cal N}(\User4)$:}}\\		\begin{tabular}{l}
		$
		\begin{array}{l}

		1: n1 = \action{new}; \\
		2: \action{send}(\Server2, (uid2, n1)); ~~// \text{access request}\\
		3: pub\_key, n2, n1 = \action{recv}(j);  ~~//\text{key from $j$}\\
		4: n3 = \action{new}; \\
		5: \action{send}(\Notary4, pub\_key, n3);\\
		6: \pred{Sig}(pvt\_key\_\Notary4, (pub\_key, l1,n3)) = \action{recv}(\Notary4);~~//\text{notary4 responds }\\
		\mbox{  }~~ \action{assert} (\text{\{l1\} equals \Server2}) \\
		7: t = \pred{Enc}(pub\_key, n2, (uid2, pwd2, \User4 )); \\
		8: \action{send}(\Server2, t); ~~// \text{send $t$ to \Server2};\\
		
				\end{array} $ 
                \end{tabular}
		\\
	
	\framebox{\noindent\textbf{Norm ${\cal N}(\User2)$:}}\\		\begin{tabular}{l}
		$
		\begin{array}{l}
		1: n1 = \action{new}; \\
		2: \action{send}(\User3, (n1)); \\
		3: \pred{Sig}(pvt\_key\_j, (n2, n1)) = \action{recv}(\User3);
		4: 
		\end{array} $ 
                \end{tabular}
		\\

	\framebox{\noindent\textbf{Norm ${\cal N}(\User3)$:}}\\		
	\begin{tabular}{l}
		$
		\begin{array}{l}
		1:  n1 = \action{recv}(\User2);\\
		2:  n2 = \action{new}; \\
		3: \action{send}(\User3, \pred{Sig}(pvt\_key\_\User3, (n2, n1)) ); \\
		\end{array} $ 
                \end{tabular}
\end{tabular}

\caption{Norms for \Server1, \User1, \Server2, \User4, \User2, \User3 and the notaries. \Adversary's norm is the trivial empty program.}
\label{fig:norms1}
\end{figure}

\subsubsection{Preliminaries}


\paragraph{Notation} 
The programs in this example use several primitive functions
$\zeta$. $\pred{Enc}(k,m)$ and $\pred{Dec}(k',m)$ denote encryption
and decryption of message $m$ with key $k$ and $k'$
respectively. $\pred{Hash}(m)$ generates the hash of term
$m$. $\pred{Sig}(k,m)$ denotes message $m$ signed with the key $k$,
paired with $m$ in the clear. $pub\_key\_i$ and $pvt\_key\_i$ denote
the public and private keys of thread $i$, respectively. For
readability, we include the intended recipient $i$ and expected sender
$j$ of a message as the first argument of $\action{send}(i,m)$ and
$\action{recv}(j)$ expressions. As explained earlier, $i$ and $j$ are
ignored during execution and a network adversary, if present, may
capture or inject any messages. $\pred{Send}(i, j, m) \at u$ holds if thread $i$ sends message $m$ to thread $j$ at time $u$ and $\pred{Recv}(i, j, m) \at u$ hold if thread $i$ receives message $m$ from thread $j$ at time $u$. $P_1(u)$ and $P_2(u)$ denotes the tuples in the permission matrices at time $u$. Initially $P_1$ and $P_2$ do not contain any access 
permissions.

\paragraph{Assumptions} 
(A1) \[\pred{HonestThread}(\Server1,{\cal A}(\Server1))\] 
We are interested in security guarantees about users who create accounts by interacting with the server and who do not share the generated password 
or user-id with any other principal except for sending it according to the roles specified in the program given below. 

(A2) \[\pred{HonestThread}(\User1,{\cal A}(\User1))\]
(A3) \[\pred{HonestThread}(\Adversary, {\cal A}(\Adversary))\]
(A4) \[\pred{HonestThread}(\Notary1, {\cal A}(\Notary1))\]
(A5) \[\pred{HonestThread}(\Notary2, {\cal A}(\Notary2))\]
(A6) \[\pred{HonestThread}(\Notary3,{\cal A}(\Notary3))\]
(A7) \[\pred{HonestThread}(\Notary4,{\cal A}(\Notary4))\]
(A8) \[\pred{HonestThread}(\Server2,{\cal A}(\Server2))\]
(A9) \[\pred{HonestThread}(\User4,{\cal A}(\User4))\]
(A10) \[\pred{HonestThread}(\User2,{\cal A}(\User2))\]
(A11) \[\pred{HonestThread}(\User3,{\cal A}(\User3))\]

A principal following the protocol never shares its keys with any other entity. We also assume that the encryption scheme in semantically secure and non-malleable. 
Since we identify threads with principals therefore each of the threads are owned by principals with the same identifier, for instance \Server1 owns the thread that executes the program {\cal A}(\Server1).

(Start1) \[Start(i) \at \ninfty \] where $i$ refers to all the threads in the set described above. 

\paragraph{Security property}
The security property of interest to us is that if at time $u$, a
thread $k$ is given access to account $a$, then $k$ owns
$a$. Specifically, in this example, we are interested in the $a =
acct_1$ and $k = \User1$. This can be formalized by the following
logical formula, $\neg \varphi_V$:
\begin{equation}\label{eqn:property'}
 \forall u,k. ~(acct_1, k) \in
P_1(u) \imp (k = \User1)
\end{equation}
Here, $P_1(u)$ is the state of the access control matrix $P_1$ for
\Server1 at time $u$.

The actuals for all threads are shown in Figure~\ref{fig:actuals1} and~\ref{fig:actuals2}.

	\begin{figure}

\small

		\begin{tabular}{p{0.5\textwidth}}

	\framebox{\noindent\textbf{Actual ${\cal A}(\Adversary)$ }}\\

		\begin{tabular}{l}
		$
		\begin{array}{l}
		1: (uid1, n1) = \action{recv}(j);     ~~// \text{intercept req from $\User1$}\\
		2: n2 = \action{new}; \\
		3: \action{send}(\User1, (pub\_key\_\Adversary1, n2, n1)); ~~// \text{send key to \User1}\\ 
		4: s = \action{recv}(\User1); ~~// \text{pwd from User}\\
		5: n2, uid1, pwd1, \User1 = \pred{Dec}(pvt\_key\_\Adversary, s); ~~// \text{decrypt pwd}; \\
		6: n3 = \action{new};\\
		7: \action{send}(\Server1, (uid1, n3)); ~~// \text{access request to \Server}\\
		8: pub\_key, n4, n3 = \action{recv}(\Server1); \\
		9: t = \pred{Enc}(pub\_key, (n4, uid1, pwd1, \Adversary)); // \text{encrypt pwd}\\
		10: \action{send}(\Server1, t); ~~// \text{pwd to \Server1} \\
		\end{array} $ \\
		\end{tabular}\\
		\\
		
		\framebox{\noindent\textbf{Actuals ${\cal A}(\Notary1), {\cal A}(\Notary2), {\cal A}(\Notary3)$:}}\\

		\begin{tabular}{l}
		$
		\begin{array}{l}
			//\text{ $o$ denotes \Notary1, \Notary2 or \Notary3}\\
	1:(pub\_key\_\Adversary, n1) = \action{recv}(j); \\
		2:\action{send}(j, \pred{Sig}(pvt\_key\_o, (pub\_key\_\Adversary, \Server1, n1));  ~~// \text{signed certificate to $j$}; \\%
		\end{array} $ 
		\end{tabular}\\
               \\

	\framebox{\noindent\textbf{Actual ${\cal A}(\Server1)$:}}\\
		\begin{tabular}{l}
		$
		\begin{array}{l}
   		1: (uid1, n1) = \action{recv}(j);  ~~// \text{access req from thread $j$}\\
		2: n2 = \action{new}; \\
		3: \action{send}(j, (pub\_key\_\Server1, n2, n1));\\ 
		4: n4 = \action{recv}(j);  ~~// \text{receive nonce from thread \User2}\\
		5: n5 = \action{new}; \\
		6: \action{send}(j, \pred{Sig}(pvt\_key\_\Server1, (n5, n4)));\\ 
		7: s1 = \action{recv}(j);  ~~// \text{encrypted $uid1,pwd1$ from $j$, alongwith its thread id $J$}\\
                 8: (n3, uid1, pwd1, J) = \pred{Dec}(pvt\_key\_\Server1, s1);\\ 		  
		
		9: t = \pred{Hash}(uid1, pwd1); \\
		\mbox{  }~~\action{assert} (mem_1 = t) \textbf{[A]} ~~//  \text{compare hash with stored hash value for same uid}\\
		10: \action{insert}(P_1, (acct_1, J)); \\

		\end{array} $\\
		\end{tabular}\\

	\\
		\framebox{\noindent\textbf{Actual ${\cal A}(\User1)$:}}\\
		\begin{tabular}{l}
		$
		\begin{array}{l}

		1: n1 = \action{new}; \\
		2: \action{send}(\Server1, (uid1, n1)); ~~// \text{access request}\\
		3: (pub\_key, n2, n1) = \action{recv}(j);  ~~//\text{key from $j$}\\
		4: n3, n4, n5 = \action{new}; \\
		5: \action{send}(\Notary1, pub\_key, n3);\\
		6: \action{send}(\Notary2, pub\_key, n4);\\
		7: \action{send}(\Notary3, pub\_key, n5);\\
	  8: \pred{Sig}(pvt\_key\_\Notary1, (pub\_key, l1,n3)) = \action{recv}(\Notary1);~~//\text{notary1 responds }\\
	        9: \pred{Sig}(pvt\_key\_\Notary2, (pub\_key, l2,n4)) = \action{recv}(\Notary2);~~//\text{notary2 responds }\\
					10: \pred{Sig}(pvt\_key\_\Notary3, (pub\_key, l3,n5)) = \action{recv}(\Notary3);~~//\text{notary3 responds }\\
		\mbox{  }~~ \action{assert} (\text{At least two of} \{l1,l2,l3\} \text{equal \Server1});\textbf{[B]}~~// \\
		11: t = \pred{Enc}(pub\_key, n2, (uid1, pwd1, \User1 )); \\
		12: \action{send}(\Server1, t); ~~// \text{send $t$ to \Server1};\\
		
			\end{array} $ \\
			\end{tabular}\\
	\\

	\framebox{\noindent\textbf{Actual ${\cal A}(\User2)$:}}\\		
	\begin{tabular}{l}
		$
		\begin{array}{l}
		1: n1 = \action{new}; \\
		2: \action{send}(\Server1, (n1)); \\
		3: \pred{Sig}(pvt\_key,(n2, n1)) = \action{recv}(\Server1);\\
		4: \action{send}(\User3, (n2));  \\
		5: \pred{Sig}(pub\_key, n3, n2) = \action{recv}(\User3);
		\end{array} $ 
                \end{tabular}\\
		\\

	\framebox{\noindent\textbf{Actual ${\cal A}(\User3)$:}}\\		
	\begin{tabular}{l}
		$
		\begin{array}{l}
		1:  n1 = \action{recv}(\User2);\\
		2:  n2 = \action{new}; \\
		3: \action{send}(\User3, \pred{Sig}(pvt\_key\_\User3, n2, n1) ); \\
		\end{array} $ 
                \end{tabular}\\
	
\end{tabular}

\caption{Actuals for \Adversary, \Notary1, \Notary2, \Notary3, \Server1, \User1, \User2, \User3}
\label{fig:actuals1}
\end{figure}

		
		\begin{figure}

\small

		\begin{tabular}{p{0.5\textwidth}}

								\framebox{\noindent\textbf{Actual ${\cal A}(\Server2)$:}}\\
		\begin{tabular}{l}
		$
		\begin{array}{l}

   		 1: (uid2, n1) = \action{recv}(j);  ~~// \text{access req from thread $j$}\\
			2: n2 = \action{new}; \\
		 3: \action{send}(j,(pub\_key\_\Server2, n2, n1));\\ 
		4: s1 = \action{recv}(j);  ~~// \text{encrypted $uid2,pwd2$ from $j$, alongwith its thread id $J$}\\
    5: (n2, uid2, pwd2, J) = \pred{Dec}(pvt\_key\_\Server2, s1);\\ 		  
		6: t = \pred{Hash}(uid2, pwd2); \\
		\mbox{  }~~\action{assert} (mem_2 = t)  ~~//(C) \text{compare hash with stored hash value for same uid}\\
		7: \action{insert}(P_2, (acct_2, J)); \\
   		 		\end{array} $\\
		\end{tabular}\\
		
			\\

	\framebox{\noindent\textbf{Actual ${\cal A}(\User4)$:}}\\		
	\begin{tabular}{l}
		$
		\begin{array}{l}

		1: n1 = \action{new}; \\
		2: \action{send}(\Server2, (uid2, n1)); ~~// \text{access request}\\
		3: \pred{Sig}(pub\_key, n2, n1) = \action{recv}(j);  ~~//\text{key from $j$}\\
		4: n3 = \action{new}; \\
		5: \action{send}(\Notary4, pub\_key, n3);\\
		6: \pred{Sig}(pvt\_key\_\Notary4, (pub\_key, l1,n3)) = \action{recv}(\Notary4);~~//\text{notary4 responds }\\
		\mbox{  }~~  \action{assert} (\text{\{l1\} equals \Server2}) (D)\\
		7: t = \pred{Enc}(pub\_key, n2, (uid2, pwd2, \User4 )); \\%
		8: \action{send}(\Server2, t); ~~// \text{send $t$ to \Server2};\\
		
				\end{array} $ 
                \end{tabular}\\
		\\

				\framebox{\noindent\textbf{Actual ${\cal A}(\Notary4)$:}}\\

		\begin{tabular}{l}
		$
		\begin{array}{l}
		//\text{ $o$ denotes \Notary1, \Notary2, \Notary3 or \Notary4}\\
		1: (pub\_key, n1) = \action{recv}(j); \\
		2: pr = \pred{KeyOwner}(pub\_key); ~~~// \text{lookup key owner} \\
		3: \action{send}(j, \pred{Sig}(pvt\_key\_o, (pub\_key, pr, n1))); ~~// \text{signed certificate}; \\

		\end{array} $ 

		\end{tabular}\\

	\end{tabular}

\caption{Actuals for \Server2, \User4, \Notary4}
\label{fig:actuals2}
\end{figure}

\subsubsection{Attack}
As an illustration, we model the ``Compromised Notaries'' violation of Section~\ref{sec:example}. The programs executed by all threads are given in Figures~\ref{fig:actuals1} and~\ref{fig:actuals2}. \User1\ sends an access
request to \Server1\ which is intercepted by \Adversary\ who sends its
own key to \User1\ (pretending to be \Server1). \User1\ checks with
the three notaries who falsely verify \Adversary's public key to be
\Server1's key. Consequently, \User1\ sends the password to
\Adversary. {\Adversary}\ then initiates a protocol with {\Server1}
and gains access to the \User1's account. Note that the actual
programs of the three notaries attest that the public key given to
them belongs to \Server1. In parallel, \User2\ sends a request to \Server1 and receives a response from \Server1. Following this interaction, \User2 interacts with \User3, as in their norms. \User4, \Server2 and \Notary4 execute
their actuals in order to access the account $acct_2$ as well.

Figure~\ref{fig:log1} shows the expressions executed by each thread
on the property-violating trace. For instance, the label
$\langle \langle \User1, 1 \rangle, \langle \Adversary,
1 \rangle \rangle $ indicates that both \User1 and \Adversary\
executed the expressions with the line number 1 in their actual
programs, which resulted in a synchronous communication between them,
while the label $\langle \Adversary, 4 \rangle $ indicates the local
execution of the expression at line~4 of \Adversary's program. The
initial configuration has the programs: $\{{\cal A}(\User1), {\cal
A}(\Server1), {\cal A}({\Adversary}), {\cal A}({\Notary1}),  {\cal
A}({\Notary2}), {\cal A}({\Notary3}), {\cal A}(\User2), {\cal
A}(\User3), {\cal A}(\User4), {\cal A}(\Server2),\\{\cal A}(\Notary4) \}$. 
For this attack scenario, the concrete trace $t$ we consider is such that $\log(t)$ is any
\emph{arbitrary interleaving} of the actions for $X_1= \{\Adversary, \User1,
\Server1, \Notary1, \Notary2, \Notary3, \User2, \User3\}$ and $X_2=\{\Server2, \User4,
\Notary4\}$ shown in Figure~\ref{fig:log1}(a) and Figure~\ref{fig:log2}. Any such interleaved log is
denoted $\log(t)$ in the sequel. At the end of this log, $(acct_{1}, \Adversary)$ occurs in the access
control matrix $P_1$, but {\Adversary} does not own $acct_{1}$. Hence, this
log corresponds to a violation of our security property.


\begin{figure*}

\small
\begin{tabular}{p{0.33\textwidth}|p{0.33\textwidth}|p{0.34\textwidth}}

\begin{tabular}[t]{l}

\small

	{\noindent\textbf{(a)}}\\

	\framebox{\noindent\textbf{$\log(t)|_{\Adversary}$ }}\\

		\begin{tabular}{l}
		$
		\begin{array}{l}
		
		\langle \langle \User1, 2 \rangle, \langle \Adversary, 1 \rangle \rangle, \\
			\langle  \Adversary, 2 \rangle,\\
	
			\langle \langle \Adversary, 3  \rangle, \langle \User1, 3 \rangle \rangle, \\
			\langle \langle \User1, 12 \rangle, \langle \Adversary, 4 \rangle \rangle,\\
			\langle  \Adversary, 5 \rangle,\\
			\langle  \Adversary, 6 \rangle,\\
			
		       \langle \langle \Adversary, 7 \rangle,\langle \Server1, 1 \rangle \rangle,\\
			\langle \langle \Server1, 3 \rangle, \langle \Adversary, 8 \rangle \rangle, \\
			\langle  \Adversary, 9 \rangle,\\
			
      		\langle \langle \Adversary, 10  \rangle,  \langle \Server1, 7 \rangle \rangle, \\

		\end{array} $ \\
		\end{tabular}\\
		\\
		
		\framebox{\noindent\textbf{$\log(t)|_{\User1}$}}\\
		\begin{tabular}{l}
		$
		\begin{array}{l}
		\langle \User1, 1 \rangle,\\
		\langle \langle \User1, 2 \rangle, \langle \Adversary, 1 \rangle \rangle, \\
		\langle \langle \Adversary, 3 \rangle, \langle \User1, 3 \rangle \rangle, \\
		\langle \User1, 4 \rangle,\\
		\langle \langle \User1, 5 \rangle,  \langle \Notary1, 1 \rangle \rangle, \\
		\langle \langle \User1, 6 \rangle,  \langle \Notary2, 1 \rangle \rangle,\\
		\langle \langle \User1, 7 \rangle,  \langle \Notary3, 1 \rangle \rangle,\\

		   \langle \langle \Notary1, 2 \rangle, \langle \User1, 8 \rangle \rangle,  \\
                  \langle \langle \Notary2, 2 \rangle, \langle \User1, 9 \rangle \rangle,  \\
                          \langle \langle \Notary3, 2 \rangle, \langle \User1, 10 \rangle \rangle,  \\
	 		\langle \User1, 11 \rangle, \\
			\langle \langle \User1, 12 \rangle, \langle \Adversary, 4 \rangle \rangle,\\
		
			\end{array} $ \\
			\end{tabular}\\
			
\\
		\framebox{\noindent\textbf{$\log(t)|_{\Server1}$:}}\\
		\begin{tabular}{l}
		$
		\begin{array}{l}

		  \langle \langle \Adversary, 7 \rangle,\langle \Server1, 1 \rangle \rangle,\\
   				\langle \Server1, 2 \rangle, \\
				\langle \langle \Server1, 3 \rangle, \langle \Adversary, 8 \rangle \rangle, \\
				    \langle  \langle \User2, 2 \rangle, \langle \Server1, 4 \rangle \rangle, \\
						\langle \Server1, 5 \rangle, \\
		\langle \langle \Server1, 6 \rangle, \langle \User2, 3 \rangle \rangle, \\
				\langle \langle \Adversary, 10 \rangle,  \langle \Server1, 7 \rangle \rangle, \\
               \langle \Server1, 8 \rangle,\\ 		  
		       \langle \Server1, 9 \rangle,\\ 	
			       \langle \Server1, 10 \rangle,\\ 
		 				
		\end{array} $\\
		\end{tabular}\\
		\\
			
		\framebox{\noindent\textbf{$\log(t)|_{\Notary1}$:}}\\

		\begin{tabular}{l}
		$
		\begin{array}{l}
		\langle \langle \User1, 5 \rangle,  \langle \Notary1, 1 \rangle \rangle, \\
                  \langle \langle \Notary1, 2 \rangle, \langle \User1, 8 \rangle \rangle, \\
		\end{array} $ 
		\end{tabular}\\
               \\

		\framebox{\noindent\textbf{$\log(t)|_{\Notary2}$:}}\\

		\begin{tabular}{l}
		$
		\begin{array}{l}
		\langle \langle \User1, 6 \rangle,  \langle \Notary2, 1 \rangle \rangle, \\
                	\langle \langle \Notary2, 2 \rangle, \langle \User1, 9 \rangle \rangle,  \\
		\end{array} $ 
		\end{tabular}\\
               \\
               
               \framebox{\noindent\textbf{$\log(t)|_{\Notary3}$:}}\\

		\begin{tabular}{l}
		$
		\begin{array}{l}
		\langle \langle \User1, 7 \rangle,  \langle \Notary3, 1 \rangle \rangle, \\
                	\langle \langle \Notary3, 2 \rangle, \langle \User1, 10 \rangle \rangle,  \\
		\end{array} $ 
		\end{tabular}\\
               \\

		   \framebox{\noindent\textbf{$\log(t)|_{\User2}$:}}\\

		\begin{tabular}{l}
		$
		\begin{array}{l}
		    \langle \User2, 1 \rangle, \\
				\langle  \langle \User2, 2 \rangle, \langle \Server1, 4 \rangle \rangle, \\
		\langle \langle \Server1, 6 \rangle, \langle \User2, 3 \rangle \rangle, \\
		 \langle \langle \User2, 4 \rangle, \langle \User3, 1 \rangle \rangle, \\
			 \langle  \langle \User3, 3 \rangle, \langle \User2, 5 \rangle \rangle, \\
		\end{array} $ 
		\end{tabular}\\
               \\
               
                 \framebox{\noindent\textbf{$\log(t)|_{\User3}$:}}\\

		\begin{tabular}{l}
		$
		\begin{array}{l}
		 \langle \langle \User2, 4 \rangle, \langle \User3, 1 \rangle \rangle, \\
		\langle \User3, 2 \rangle, \\
		 \langle  \langle \User3, 3 \rangle, \langle \User2, 5 \rangle \rangle, \\
		\end{array} $ 
		\end{tabular}\\
               
							\end{tabular}
							
						&
\begin{tabular}[t]{l}

\small

	{\noindent\textbf{(b)}}\\
							\framebox{\noindent\textbf{$l|_{\Adversary}$ }}\\

		\begin{tabular}{l}
		$
		\begin{array}{l}
		
		\langle \langle \User1, 2 \rangle, \langle \Adversary, 1 \rangle \rangle, \\
			\langle  \Adversary, 2 \rangle,\\
	
			\langle \langle \Adversary, 3  \rangle, \langle \User1, 3 \rangle \rangle, \\
			\langle \langle \User1, 12 \rangle, \langle \Adversary, 4 \rangle \rangle,\\
			\langle  \Adversary, 5 \rangle,\\
			\langle  \Adversary, 6 \rangle,\\
			
		       \langle \langle \Adversary, 7 \rangle,\langle \Server1, 1 \rangle \rangle,\\
			\langle \langle \Server1, 3 \rangle, \langle \Adversary, 8 \rangle \rangle, \\
			\langle  \Adversary, 9 \rangle,\\
			
      		\langle \langle \Adversary, 10  \rangle,  \langle \Server1, 7 \rangle \rangle, \\

		\end{array} $ \\
		\end{tabular}\\
		\\
		
		\framebox{\noindent\textbf{$l|_{\User1}$}}\\
		\begin{tabular}{l}
		$
		\begin{array}{l}
		\langle \User1, 1 \rangle\\
		\langle \langle \User1, 2 \rangle, \langle \Adversary, 1 \rangle \rangle, \\
		\langle \langle \Adversary, 3 \rangle, \langle \User1, 3 \rangle \rangle, \\
		\langle \User1, 4 \rangle,\\
		\langle \langle \User1, 5 \rangle,  \langle \Notary1, 1 \rangle \rangle, \\
		\langle \langle \User1, 6 \rangle,  \langle \Notary2, 1 \rangle \rangle,\\
		\langle \langle \User1, 7 \rangle,  \langle \Notary3, 1 \rangle \rangle,\\

		   \langle \langle \Notary1, 2 \rangle, \langle \User1, 8 \rangle \rangle,  \\
                  \langle \langle \Notary2, 2 \rangle, \langle \User1, 9 \rangle \rangle,  \\
                          \langle \langle \Notary3, 2 \rangle, \langle \User1, 10 \rangle \rangle,  \\
	 		\langle \User1, 11 \rangle, \\
			\langle \langle \User1, 12 \rangle, \langle \Adversary, 4 \rangle \rangle,\\
		
			\end{array} $ \\
			\end{tabular}\\
			
\\
		\framebox{\noindent\textbf{$l|_{\Server1}$:}}\\
		\begin{tabular}{l}
		$
		\begin{array}{l}

		  \langle \langle \Adversary, 7 \rangle,\langle \Server1, 1 \rangle \rangle,\\
   				\langle \Server1, 2 \rangle, \\
				\langle \langle \Server1, 3 \rangle, \langle \Adversary, 8 \rangle \rangle, \\
				    \langle  \langle \User2, 2 \rangle, \langle \Server1, 4 \rangle \rangle, \\
						\langle \Server1, 5 \rangle, \\
		\langle \langle \Server1, 6 \rangle, \langle \User2, 3 \rangle \rangle, \\
				\langle \langle \Adversary, 10 \rangle,  \langle \Server1, 7 \rangle \rangle, \\
               \langle \Server1, 8 \rangle,\\ 		  
		       \langle \Server1, 9 \rangle,\\ 	
			       \langle \Server1, 10 \rangle,\\ 	 				
		\end{array} $\\
		\end{tabular}\\
		\\

		\framebox{\noindent\textbf{$l|_{\Notary1}$:}}\\

		\begin{tabular}{l}
		$
		\begin{array}{l}
		\langle \langle \User1, 5 \rangle,  \langle \Notary1, 1 \rangle \rangle, \\
                  \langle \langle \Notary1, 2 \rangle, \langle \User1, 8 \rangle \rangle, \\
		\end{array} $ 
		\end{tabular}\\
               \\

		\framebox{\noindent\textbf{$l|_{\Notary2}$:}}\\

		\begin{tabular}{l}
		$
		\begin{array}{l}
		\langle \langle \User1, 6 \rangle,  \langle \Notary2, 1 \rangle \rangle, \\
                	\langle \langle \Notary2, 2 \rangle, \langle \User1, 9 \rangle \rangle,  \\
		\end{array} $ 
		\end{tabular}\\
               \\
               
               \framebox{\noindent\textbf{$l|_{\Notary3}$:}}\\

		\begin{tabular}{l}
		$
		\begin{array}{l}
		\langle \langle \User1, 7 \rangle,  \langle \Notary3, 1 \rangle \rangle, \\
                	\langle \langle \Notary3, 2 \rangle, \langle \User1, 10 \rangle \rangle,  \\
		\end{array} $ 
		\end{tabular}\\
               \\

			   \framebox{\noindent\textbf{$l|_{\User2}$:}}\\

		\begin{tabular}{l}
		$
		\begin{array}{l}
		      \langle \User2, 1 \rangle, \\
				\langle  \langle \User2, 2 \rangle, \langle \Server1, 4 \rangle \rangle, \\
		\langle \langle \Server1, 6 \rangle, \langle \User2, 3 \rangle \rangle, \\

		\end{array} $ 
		\end{tabular}\\

						\end{tabular}
			
			&
							\begin{tabular}[t]{l}

\small
  
	{\noindent\textbf{(c)}}\\
							\framebox{\noindent\textbf{$a_d|_{\Adversary}$ }}\\

		\begin{tabular}{l}
		$
		\begin{array}{l}
		
		\langle \langle \User1, 2 \rangle, \langle \Adversary, 1 \rangle \rangle, \\
			\langle  \Adversary, 2 \rangle,\\
	
			\langle \langle \Adversary, 3  \rangle, \langle \User1, 3 \rangle \rangle, \\
			\langle \langle \User1, 12 \rangle, \langle \Adversary, 4 \rangle \rangle,\\
			\langle  \Adversary, 5 \rangle,\\
			\langle  \Adversary, 6 \rangle,\\
			
		       \langle \langle \Adversary, 7 \rangle,\langle \Server1, 1 \rangle \rangle,\\
			\langle \langle \Server1, 3 \rangle, \langle \Adversary, 8 \rangle \rangle, \\
			\langle  \Adversary, 9 \rangle,\\
			
      		\langle \langle \Adversary, 10  \rangle,  \langle \Server1, 7 \rangle \rangle, \\

		\end{array} $ \\
		\end{tabular}\\
		\\
		
		\framebox{\noindent\textbf{$a_d|_{\User1}$}}\\
		\begin{tabular}{l}
		$
		\begin{array}{l}
		\langle \User1, 1 \rangle\\
		\langle \langle \User1, 2 \rangle, \langle \Adversary, 1 \rangle \rangle, \\
		\langle \langle \Adversary, 3 \rangle, \langle \User1, 3 \rangle \rangle \\
		\langle \User1, 4 \rangle,\\
		\langle \langle \User1, 5 \rangle,  \langle \Notary1, 1 \rangle \rangle, \\
		\langle \langle \User1, 6 \rangle,  \langle \Notary2, 1 \rangle \rangle,\\
		\\
		   \langle \langle \Notary1, 2 \rangle, \langle \User1, 8 \rangle \rangle,  \\
                  \langle \langle \Notary2, 2 \rangle, \langle \User1, 9 \rangle \rangle,  \\
                  \\
	 		\langle \User1, 11 \rangle, \\
			\langle \langle \User1, 12 \rangle, \langle \Adversary, 4 \rangle \rangle,\\
		
			\end{array} $ \\
			\end{tabular}\\
			
\\

	\framebox{\noindent\textbf{$a_d|_{\Server1}$:}}\\
		\begin{tabular}{l}
		$
		\begin{array}{l}

		  \langle \langle \Adversary, 7 \rangle,\langle \Server1, 1 \rangle \rangle,\\
		
   				\langle \Server1, 2 \rangle \\
				\langle \langle \Server1, 3 \rangle, \langle \Adversary, 8 \rangle \rangle, \\
				\\
				\\
				\\
				
				\langle \langle \Adversary, 10 \rangle,  \langle \Server1, 7 \rangle \rangle, \\

               \langle \Server1, 8 \rangle,\\ 		  
		       \langle \Server1, 9 \rangle,\\ 	
			       \langle \Server1, 10 \rangle,\\

		\end{array} $\\
		\end{tabular}\\
			\\
		\framebox{\noindent\textbf{$a_d|_{\Notary1}$:}}\\

		\begin{tabular}{l}
		$
		\begin{array}{l}
		\langle \langle \User1, 5 \rangle,  \langle \Notary1, 1 \rangle \rangle, \\
                  \langle \langle \Notary1, 2 \rangle, \langle \User1, 8 \rangle \rangle, \\
		\end{array} $ 
		\end{tabular}\\
               \\

		\framebox{\noindent\textbf{$a_d|_{\Notary2}$:}}\\

		\begin{tabular}{l}
		$
		\begin{array}{l}
		\langle \langle \User1, 6 \rangle,  \langle \Notary2, 1 \rangle \rangle, \\
                	\langle \langle \Notary2, 2 \rangle, \langle \User1, 9 \rangle \rangle,  \\
		\end{array} $ 
		\end{tabular}\\
 
						\end{tabular}

			\end{tabular}

\caption{\emph{Left to Right:} \textbf{(a):} $\log(t)|_i$ for $i \in \{\Adversary, \User1, \Server1, \Notary1, \Notary2, \Notary3, \User2, \User3\}$. \textbf{(b):} Lamport cause $l$ for Theorem~\ref{theorem-passwords2}. $l|_i= \emptyset$ for $i \in \{\Notary4, \Server2, \User4, \User3\}$ as output by Definition~\ref{definition:cause1'}. \textbf{(c):} Actual cause $a_d$ for Theorem~\ref{theorem-passwords2}. $a_d|_i= \emptyset$ for $i \in \{\Notary3, \Notary4, \Server2, \User4, \User2, \User3\}$. $a_d$ is a projected \emph{sublog}
of Lamport cause $l$.}
\label{fig:log1}
\end{figure*}

\begin{figure}

\small
		
		\begin{tabular}{p{0.5\textwidth}}

		\framebox{\noindent\textbf{$\log(t)|_{\Server2}$:}}\\
		\begin{tabular}{l}
		$
		\begin{array}{l}
 		\langle \langle \User4, 2 \rangle, \langle \Server2, 1 \rangle \rangle, \\
		\langle \Server2, 2 \rangle, \\
		 \langle \langle \Server2, 3 \rangle, \langle \User4, 3 \rangle \rangle, \\
		 \langle \langle \User4, 8 \rangle, \langle \Server2, 4 \rangle \rangle, \\\
 		\langle \Server2, 5 \rangle, \\
		\langle \Server2, 6 \rangle, \\
		\langle \Server2, 7 \rangle, \\
		
		\end{array} $\\
		\end{tabular}\\
		\\

		\framebox{\noindent\textbf{$\log(t)|_{\User1}$}}\\
		\begin{tabular}{l}
		$
		\begin{array}{l}
		 \langle \User4, 1 \rangle, \\
		  \langle \langle \User4, 2 \rangle, \langle \Server2, 1 \rangle \rangle, \\
		 \langle \langle \Server2, 3 \rangle, \langle \User4, 3 \rangle \rangle, \\
		  \langle \User4, 4 \rangle, \\
		 \langle \langle \User4, 5 \rangle, \langle \Notary4, 1 \rangle \rangle, \\
		 \langle \langle \Notary4, 3 \rangle, \langle \User4, 6 \rangle \rangle, \\
		   \langle \User4, 7 \rangle, \\
		 \langle \langle \User4, 8 \rangle, \langle \Server2, 4 \rangle \rangle, \\\
		
			\end{array} $ \\
			\end{tabular}\\
			
\\
			
		\framebox{\noindent\textbf{$\log(t)|_{\Notary4}$:}}\\

		\begin{tabular}{l}
		$
		\begin{array}{l}
		 \langle \langle \User4, 5 \rangle, \langle \Notary4, 1 \rangle \rangle, \\
		 \langle \Notary4, 2 \rangle, \\
		 \langle \langle \Notary4, 3 \rangle, \langle \User4, 6 \rangle \rangle, \\
		\end{array} $ 
		\end{tabular}\\
               \\

			\end{tabular}

\caption{$\log(t)|_i$ where $i \in \{\User4, \Server2, \Notary4\}$}
\label{fig:log2}
\end{figure}


Note that, if any two of the three notaries
had attested the \Adversary's key to belong to \Server1, the violation would have still
happened. Consequently, we may expect three independent program
causes in this
example: \{\Adversary, \User1, \Server1, \Notary1, \Notary2\} with the
action causes $a_d$ as shown in Figure~\ref{fig:log1}(c),
\{\Adversary, \User1, \Server1, \Notary1, \Notary3\} with the actions $a_d'$, and
\{\Adversary, \User1, \Server1, \Notary2, \Notary3\} with the actions $a_d''$ where $a_d'$ and $a_d''$ can be obtained from $a_d$ (Figure~\ref{fig:log1}(c)) by considering actions for \{\Notary1, \Notary3\} and \{\Notary2, \Notary3\} respectively, instead of actions for \{\Notary1, \Notary2\}. The following
theorem states that our definitions determine exactly these three
independent causes.

\begin{theorem}\label{theorem-passwords2}
Let $I = \{\User1,\Server1,\Adversary,\Notary1,\Notary2,\Notary3,
\Notary4, \Server2, \User4, \User2, \User3\}$, and $\Sigma$ and ${\cal A}$ be as described above. Let $t$ be a trace
from $\langle I, {\cal A}, \Sigma\rangle$ such that $log(t)|_i$ for
each $i \in I$ matches the corresponding log projection from
Figures~\ref{fig:log1}(a) and~\ref{fig:log2}. Then, Definition~\ref{definition:cause2b}
determines three possible values for the program cause $X$ of
violation
$t \in \varphi_V$: \{\Adversary, \User1, \Server1, \Notary1, \Notary2\},
\{\Adversary, \User1, \Server1, \Notary1, \Notary3\}, and
\{\Adversary, \User1, \Server1, \Notary2, \Notary3\} where the corresponding actual causes are $a_d, a_d'$ and $a_d''$ respectively.

\end{theorem}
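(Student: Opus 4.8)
The plan is to follow the two-phase recipe of Definitions~\ref{definition:cause1'}, \ref{definition:cause2a} and~\ref{definition:cause2b}, essentially lifting the argument used for Theorem~\ref{theorem-passwords1} and taking extra care of the additional threads \Server2, \User4, \Notary4 (and the tail of \User3's run) that appear only in this larger model.

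\textbf{Phase~1.} I would take $l$ to be the log whose projections are displayed in Figure~\ref{fig:log1}(b) and check the two conditions of Definition~\ref{definition:cause1'}. For \emph{sufficiency}, fix any trace $t'$ from ${\cal C}_0 = \langle I, {\cal A}, \Sigma \rangle$ with $l \leq_p \log(t')$. By construction $l$ contains, in Lamport's happens-before order, every action of every thread up to and including line~10 of ${\cal A}(\Server1)$ --- the \action{insert} into $P_1$ --- so every such $t'$ must execute that \action{insert} and add $(acct_1, J)$ to $P_1$, where $J$ is the fourth component decrypted by \Server1 at line~8 from the message it received at line~7. Following the synchronization labels recorded in $l$, that message is exactly the one sent by \Adversary\ at line~10, whose payload (built at \Adversary's line~9) has the hardcoded term ``\Adversary'' in that position; hence $J = \Adversary \neq \User1$ and $t' \in \varphi_V$. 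Note that $l$ mentions no action of \Server2, \User4, \Notary4, or \User3: these either never synchronize with any thread relevant to the \action{insert}, or (for \User3) synchronize only with \User2 strictly after it, and since $\varphi_V$ speaks only about $P_1$ their omission does not disturb sufficiency. For \emph{minimality}, every label in $l$ causally precedes \Server1's \action{insert}, so deleting any one of them drops the \action{insert} --- or an action the \action{insert} transitively depends on --- from some trace in $T$, and sufficiency fails.

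\textbf{Phase~2.} The three candidates $a_d$, $a_d'$, $a_d''$ differ only by which pair of notaries is retained, so by symmetry it suffices to verify Definition~\ref{definition:cause2a} for $a_d$ (Figure~\ref{fig:log1}(c)); Definition~\ref{definition:cause2b} then forces $X = \{\Adversary, \User1, \Server1, \Notary1, \Notary2\}$, and likewise for the other two. For \emph{sufficiency'}, fix an arbitrary dummifying function $f$ and let ${\cal C}_0' = {\sf dummify}(I, {\cal A}, \Sigma, a_d, f)$. Passing from $l$ to $a_d$ removes (i)~\Notary3 together with the \User1--\Notary3 messages, so \User1's line~10 \action{recv} is replaced by the term $f(\User1,10)$, and (ii)~\User2 together with its messages to/from \Server1, so \Server1's lines~4--6 become dummies; in both cases the threads still make progress. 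The decisive observation is that \Notary1 and \Notary2 still attest \Adversary's key as \Server1's, so assertion~[B] in ${\cal A}(\User1)$ still succeeds, \User1 still encrypts and sends $pwd_1$ to \Adversary, and \Adversary\ still drives \Server1 to its \action{insert}. The same chain of reasoning as in Phase~1 (from \Server1's line~10 back through its line~7 \action{recv}, its synchronization with \Adversary's line~10, and the hardcoded ``\Adversary'' at \Adversary's line~9) shows every trace $t'$ from ${\cal C}_0'$ with $a_d \sqsubseteq_p \log(t')$ adds $(acct_1, \Adversary)$ to $P_1$, hence $t' \in \varphi_V$; non-emptiness is witnessed by $t' = t$. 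For \emph{minimality'}, I would argue that $a_d$ consists of exactly those labels of $l$ on whose returned values \Server1's \action{insert} is transitively data- or control-dependent --- in particular \Notary1's and \Notary2's attestations, which jointly control assertion~[B] --- so that for any proper sublog there is a choice of $f$ under which either assertion~[B] fails or the fourth decrypted component at \Server1's line~8 is no longer forced to be ``\Adversary'', breaking sufficiency'.

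\textbf{Main obstacle.} The delicate part, as in Theorem~\ref{theorem-passwords1}, is the bookkeeping around ${\sf dummify}$: one must check that, once the dropped \action{recv}s are replaced by $f$-terms, the retained synchronizations in $a_d$ can still all fire in a consistent order (no thread stalls on an \action{assert} or waits for a message that is no longer sent), and that the majority test behaves monotonically, remaining satisfied as long as any two notary attestations naming \Server1 survive. A second point, new to this larger model, is checking cleanly that \Server2, \User4, \Notary4 and the separate matrix $P_2$ are genuinely inert: since they share no synchronization labels with the threads in $l$ and $\varphi_V$ refers only to $P_1$, they can be carried along unchanged in every trace considered in the sufficiency and minimality arguments, so the causes coincide with those of the smaller model of Theorem~\ref{theorem-passwords1}. (Routine care is also needed about assumptions (A1)--(A11) and non-malleability of encryption, which guarantee that the message \Server1 decrypts at line~8 really is the one \Adversary\ sent at line~10.)
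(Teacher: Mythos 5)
Your proposal is correct and follows essentially the same two-phase route as the paper's proof: the same Lamport cause $l$ from Figure~\ref{fig:log1}(b), the same symmetry reduction to the single candidate $a_d$, the same progress argument for the dummified \User1--\Notary3 and \Server1--\User2 interactions, and the same backward data/control-dependency tracing (from \Server1's \action{insert} through assertions [A] and [B]) for sufficiency' and minimality'. The only slip is the claim that non-emptiness in sufficiency' is witnessed by $t' = t$: since $a_d \neq l$, the trace $t$ is not a trace of ${\sf dummify}(I,{\cal A},\Sigma,a_d,f)$ (it contains synchronizations with \Notary3 and \User2 that dummification removes), so the witness must instead be the mimicked execution whose existence your progress argument already establishes --- which is exactly how the paper handles it.
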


It is instructive to understand the proof of this theorem, as it
illustrates our definitions of causation.  We verify that our Phase~1 and Phase~2 definitions (Definitions~\ref{definition:cause1'}, ~\ref{definition:cause2a}, ~\ref{definition:cause2b}) yield
exactly the three values for $X$ mentioned in the theorem. 

\paragraph{Lamport cause (Phase 1)}

We show that any $l$ whose projections match those shown in
Figure~\ref{fig:log1}(b) satisfies sufficiency and minimality. From
Figure~\ref{fig:log1}(b), such an $l$ has no actions for \User3, \User4, \Notary4, \Server2 and
only those actions of \User2 that are involved in synchronization
with \Server1. For all other threads, the log contains every action
from $t$. The intuitive explanation for this $l$ is straightforward:
Since $l$ must be a (projected) \emph{prefix} of the trace, and the
violation only happens because of $\action{insert}$ in the last
statement of \Server1's program, every action of every program before
that statement in Lamport's happens-before relation must be in
$l$. This is exactly the $l$ described in Figure~\ref{fig:log1}(b).

Formally, following the statement of sufficiency, let $T$ be the set
of traces starting from ${\cal C}_0 = \langle {I}, {\cal
A}, \Sigma \rangle$ (Figure~\ref{fig:actuals1}) whose logs contain
$l$ as a projected prefix. Pick any $t' \in T$. We need to show
$t' \in \varphi_V$. However, note that any $t'$ containing all actions
in $l$ must also add $(acct_1, \Adversary)$ to $P_1$, but
$\Adversary \not= \User1$. Hence, $t' \in \varphi_V$. Further, $l$ is
minimal as described in the previous paragraph.

\paragraph{Actual cause (Phase 2)}
Phase 2 (Definitions~\ref{definition:cause2a}, ~\ref{definition:cause2b}) determines
three independent program causes for
$X$: \{\Adversary, \User1, \Server1, \Notary1, \Notary2\}, \{\Adversary, \User1, \Server1, \Notary1, \Notary3\},
and \{\Adversary, \User1, \Server1, \Notary2, \Notary3\} with the actual action causes given by $a_d, a_d'$ and $a_d''$, respectively in Figure~\ref{fig:log1}(c). These are
symmetric, so we only explain why $a_d$ satisfies Definition~\ref{definition:cause2a}. (For this $a_d$,
Definition~\ref{definition:cause2b} immediately forces $X
= \{\Adversary, \User1, \Server1, \Notary1, \Notary2\}$.)  We show that (a) $a_d$ satisfies sufficiency', and (b) No proper \emph{sublog} of $a_d$ satisfies
sufficiency' (minimality'). Note that $a_d$ is obtained from
$l$ by dropping \Notary3, \User2 and \User3, and all their
interactions with other threads.

We start with (a). Let $a_d$ be such that $a_{d}|_i$ matches
Figure~\ref{fig:log1}(c) for every $i$. Fix any dummifying function
$f$. We must show that any trace originating from ${\sf dummify}(I,
{\cal A}, \Sigma,a_d,f)$, whose log contains $a_d$ as a projected
sublog, is in $\varphi_V$. Additionally we must show that there is
such a trace. There are two potential issues in mimicking the
execution in $a_d$ starting from ${\sf dummify}(I, {\cal
A}, \Sigma,a_d,f)$ --- first, with the interaction between \User1
and \Notary3 and, second, with the interaction between \Server1
and \User2. For the first interaction, on line~7, ${\cal A}(\User1)$
(Figure~\ref{fig:actuals1}) synchronizes with {\Notary3} according to
$l$, but the synchronization label does not exist in $a_d$. However,
in ${\sf dummify}(I, {\cal A}, \Sigma,a_d,f)$, the $\action{recv}()$
on line~10 in ${\cal A}(\User1)$ is replaced with a dummy value, so the
execution from ${\sf dummify}(I, {\cal A}, \Sigma,a_d,f)$
progresses. Subsequently, the majority check (assertion [B]) succeeds
as in $l$, because two of the three notaries ({\Notary1} and
{\Notary2}) still attest the {\Adversary}'s key. 

A similar observation can be made about the interaction between \Server1 and \User2.  Line~4,
${\cal A}(\Server1)$ (from
Figure~\ref{fig:log1}(b)) synchronizes with {\User2} according to $l$,
but this synchronization label does not exist in $a_d$. However, in ${\sf dummify}(I, {\cal A}, \Sigma,a_d,f)$, the
$\action{recv}()$ on line~4 in ${\cal A}(\Server1)$ is replaced with a
dummy value, so the execution from ${\sf dummify}(I, {\cal A}, \Sigma,a_d,f)$
progresses. Subsequently, \Server1 still adds permission for the \Adversary.

Next we prove that every trace starting from ${\sf dummify}(I, {\cal A}, \Sigma,a_d,f)$, whose log contains $a_d$ (Figure~\ref{fig:log1}(c)) as a projected \emph{sublog}, is in $\varphi_V$. Fix a trace $t'$ with log $l'$. Assume $l'$
coincides with $a_d$. We show $t' \in \varphi_V$ as
follows:
\begin{enumerate}
\item 
Since the synchronization labels in $l'$ are a superset of those in
$a_d$, {\Server1} must execute line~10 of its program ${\cal
A}(\Server1)$ in $t'$. After this
line, the access control matrix $P_1$ contains $(acct_1, J)$ for some
$J$.
\item
When ${\cal A}(\Server1)$ writes $(x, J)$ to $P_1$ at line~10, then $J$
is the third component of a tuple obtained by decrypting a message
received on line~7.
\item
Since the synchronization projections on $l'$ are a superset of $a_d$,
and on $a_d$ $\langle \Server1, 7\rangle$ synchronizes with
$\langle \Adversary, 10\rangle$, $J$ must be the third component of an
encrypted message sent on line~10 of ${\cal A}(\Adversary)$.

\item 
The third component of the message sent on line~10 by {\Adversary} is
exactly the term ``\Adversary''. (This is easy to see, as the term
``\Adversary'' is hardcoded on line 9.) Hence, $J = \Adversary$.
\item 
This immediately implies that $t' \in \varphi_V$ since
$(acct_1, \Adversary) \in P_1$, but ${\Adversary} \not= {\User1}$.
\end{enumerate}

Last, we prove (b) --- that no proper subsequence of $a_d$ satisfies
sufficiency'. Note that $a_d$ (Figure~\ref{fig:log1}(c)) contains
exactly those actions from $l$ (Figure~\ref{fig:log1}) on whose
returned values the last statement of \Server1's program
(Figure~\ref{fig:actuals1}) is data or control
dependent. Consequently, all of $a_d$ as shown is necessary to obtain
the violation.

In particular, observe that if labels for \Server1 ($a_d|_{\Server1}$) are not a part of $a_d'$, then {\Server1}'s labels are not
in ${\sf dummify}(I, {\cal A}, \Sigma,a_d,f)$ and, hence, on any counterfactual trace
\Server1 cannot write to $P_1$, thus precluding a violation. Therefore,
the sequence of labels in $a_d|_{\Server1}$ are required in the actual cause. 

By sufficiency', for any $f$, the log of trace $t'$ of ${\sf dummify}(I, {\cal A}, \Sigma,a_d,f)$ must contain $a_d$ as a projected \emph{sublog}. This means that in $t'$, the
assertion [A] of ${\cal A}(\Server1)$ must succeed and, hence, on
line 7, the correct password $pwd_1$ must be received by {\Server1},
independent of $f$. This immediately implies that {\Adversary}'s action of sending that password must be in $a_d$, else some dummified
executions will have the wrong password sent to \Server1 and the assertion [A] will fail. 

Extending this logic further, we now observe that because {\Adversary}
forwards a password received from {\User1} (line 4 of ${\cal
A}(\Adversary)$) to {\Server1}, the send action of $\User1$ will be in $a_d$ (otherwise, some
dummifications of line 4 of ${\cal A}(\Adversary)$ will result in the
wrong password being sent to {\Server1}, a contradiction).  Since
$\User1$'s action is in $a_d$ and $l'$ must contain $a_d$ as a \emph{sublog}, the majority check of ${\cal A}(\User1)$ must also
succeed. This means that at least two of
$\{\Notary1, \Notary2, \Notary3\}$ must send the confirmation to \User1, else the
dummification of lines 8 -- 10 of ${\cal N}(\User1)$ will cause the
assertion [B] to fail for some $f$. Since we are looking for a minimal \emph{sublog} therefore we only consider the send actions from two threads i.e. $\{\Notary1, \Notary2\}$. At this point we
have established that each of the labels  as shown in Figure~\ref{fig:log1}(c) are required in
$a_d$. Hence, $a_d' = a_d$.

\end{document}